\theoremstyle{plain}
\newtheorem{theorem}{Theorem}[section]
\newtheorem{lemma}[theorem]{Lemma}
\newtheorem{example}[theorem]{Example}
\newtheorem{corollary}[theorem]{Corollary}
\theoremstyle{definition}
\newtheorem{definition}[theorem]{Definition}
\theoremstyle{remark}
\title{Selling Privacy in Blockchain Transactions}
\date{}
\author{
 Georgios Chionas \\
   School of Computer Science and Informatics \\
  University of Liverpool\\
  \texttt{g.chionas@liverpool.ac.uk} \\
   \And
 Olga Gorelkina \\
  Moroccan Center for Game Theory \\
  University Mohammed VI Polytechnic \\
  \texttt{olga.gorelkina@um6p.ma} \\
  \And
 Piotr Krysta\thanks{P. Krysta is also affiliated with University of Liverpool} \\
   School of Computer and Cyber Sciences \\
  Augusta University\\
  \texttt{pkrysta@augusta.edu} \\
\And
 Rida Laraki \\
 Moroccan Center for Game Theory \\
  University Mohammed VI Polytechnic \\
  \texttt{rida.laraki@um6p.ma} \\    
}
\begin{document}
\newcommand{\bfx}{\mathbf{x}}
\newcommand{\bfp}{\mathbf{p}}
\newcommand{\bfq}{\mathbf{q}}
\newcommand{\tfm}{(\allocs,\prices,\burn)}
\newcommand{\fpa}{(\allocs^f,\prices^f,\burns^f)}

\newcommand\fnsep{\textsuperscript{,}}

\newcommand{\bid}{b}
\newcommand{\bids}{{\mathbf \bid}}
\newcommand{\bidsmi}{{\mathbf \bid}_{-i}}
\newcommand{\bidt}[1][t]{{\bid_{#1}}}

\newcommand{\costs}{{\mathbf{c}}}

\newcommand{\epnebid}{b^*}

\newcommand{\val}{v}
\newcommand{\vals}{{\mathbf \val}}
\newcommand{\valsmi}{{\mathbf \val}_{-i}}
\newcommand{\valt}[1][t]{{\val_{#1}}}
\newcommand{\vmin}{v_{\text{min}}}
\newcommand{\vmax}{v_{\text{max}}}

\newcommand{\alloc}{x}
\newcommand{\allocs}{{\mathbf \alloc}}
\newcommand{\yallocs}{{\mathbf{y}}}

\newcommand{\allocsmi}{\allocs_{-i}}
\newcommand{\alloct}[1][t]{\alloc_{#1}}

\newcommand{\price}{p}
\newcommand{\prices}{{\mathbf \price}}
\newcommand{\pricet}[1][t]{\price_{#1}}

\newcommand{\burn}{q}
\newcommand{\burns}{{\mathbf \burn}}
\newcommand{\burnt}[1][t]{\burn_{#1}}

\newcommand{\cost}{\gamma}
\newcommand{\oca}{(\bids,\allocs,\bm{\tau})}

\newcommand{\nine}{*}
\newcommand{\ninetfm}{(\allocs^{\nine},\prices^{\nine},\burns^{\nine})}

\newcommand{\history}{\mathbf{H}}

\newcommand{\blocks}{\mathcal{B}}
\newcommand{\blockset}{\mathcal{B}}
\newcommand{\vbp}{v_{BP}}
\newcommand{\vbphat}{\hat{v}_{BP}}
\newcommand{\basefee}{r}

\newcommand{\Rplus}{\mathbb{R}_{\ge 0}}

\newcommand{\ve}{(v, \epsilon)}
\newcommand{\vmean}{\Bar{v}(\e)}

\newcommand{\mech}{(\mathbf{x(\mathbf{b}}), \mathbf{p(\mathbf{b}}))}

\newcommand{\kda}{k-Dutch auction }
\newcommand{\revda}{\text{REV-k-DA}}
\newcommand{\Rev}{\text{Rev}}
\newcommand{\Sur}{\text{Surplus}}

\newcommand{\OPT}{\text{OPT}}
\newcommand{\ALG}{\text{ALG}}
\newcommand{\SW}{\text{SW}}

\newcommand{\bphi}{\Tilde{\phi}}

\newcommand{\htau}{\hat{\tau}}

\newcommand{\edp}{$\e$-differential privacy }
\newcommand{\eddp}{$(\e, \delta)$-differential privacy }
\newcommand{\edpm}{$\e$-differential private mechanism}
\newcommand{\e}{\varepsilon}
\newcommand{\la}{\lambda}
\newcommand{\La}{\Lambda}

\newcommand{\EXP}{\text{EXP}}
\newcommand{\an}{$\alpha n$}
\newcommand{\uout}{$u^{\text{out}}$}
\newcommand{\uinf}{$u^{\text{inf}}$}

\newcommand{\ik}{i^{\dagger}}
\newcommand{\jk}{j^{\dagger}}

\newcommand{\takeleave}{take-it-or-leave-it}

\newcommand{\XA}{X^{\mathbb{A}}}

\newcommand{\Search}{\text{S}}
\newcommand{\User}{\text{U}}

\newcommand{\E}{\mathbb{E}}

\maketitle
\begin{abstract} 
We study methods to enhance statistical privacy in blockchain transactions. We analyze economic mechanisms for privacy-aware transaction owners whose utility depends not only on the outcome of the mechanism but also negatively on the exposure of their economic preferences. First, we consider an order flow auction, where a user auctions off to specialized agents, called searchers, the right to execute her transaction while maintaining a degree of privacy. We examine how the degree of privacy affects the revenue of the auction and, broadly, the net utility of the privacy-aware user. In this new setting, we characterize the optimal auction, which is a sealed-bid auction. Subsequently, we analyze a variant of a Dutch auction in which the user gradually decreases the price and the degree of privacy until the transaction is sold. We compare the revenue of this auction to that of the optimal one as a function of the number of communication rounds. Then, we introduce a two-sided market - a privacy marketplace - with multiple users selling their transactions under their privacy preferences to multiple searchers. We propose a posted-price mechanism for the two-sided market that guarantees constant approximation of the optimal social welfare while maintaining incentive compatibility (from both sides of the market) and budget balance. 
This work builds on the emerging literature on privacy-preserving mechanism design, integrating statistical privacy guarantees into economic protocols to capture the impact of information leakage on blockchain users' utility.
\end{abstract}

\section{Introduction}
\label{sec: introduction}
Smart contract blockchains such as Ethereum \cite{Buterin13} enable one to construct flexible financial applications on top of blockchains and led to the emergence of \textit{Decentralized Finance} (DeFi). Users interact with those applications by submitting their transactions. These submitted transactions are then aggregated into blocks by agents called block producers. Even though blockchains are considered to be decentralized, temporarily, they become centralized, since in each block, the chosen block producer has unilateral power to select and order the transactions of the newly created block. This monopolistic power in conjunction with the public data of the pending transactions (public mempool) enables agents involved in the block production process to extract profit, usually at the expense of users. This phenomenon has been termed MEV \cite{Daian19}. In order to mitigate the negative externalities that come from the monopolistic power of the block producers and also decouple the role of block building and block validation, Ethereum has followed the so-called Proposer Builder Separation (PBS) roadmap \cite{Flashbots_boost} and thus currently, the block production supply chain consists of proposers, builders and searchers. In our work, we focus on the second source of value extraction, which is the public content of the pending transactions. There has been a huge amount of literature on how to safeguard users' data and their financial activity by leveraging cryptographic techniques. For example, see this survey \cite{Baldimtsi24PrivacyTransactions}. 
One may consider two extreme points. In one extreme, which is the current status of most layer-one (L1) mempools, all transactions are public and hence specialized agents (searchers) who monitor the pending transactions can take advantage of them in order to extract profit at the expense of users who submitted them. The other extreme is fully private mempools, which do result in inefficient markets and applications, since any kind of financial application requires some data to be public. 
In this work, we seek to study mechanisms that lie in the middle of the privacy spectrum, where users can internalize the value of their transaction while also maintaining some degree of privacy in their transactions. Nonetheless, we only scratched the surface of what these mechanisms might be able to offer, for example, optimizing for privacy, efficiency, user surplus, and other key design objectives.

\subsection{Related Work}
\label{sec:related_work}
The pre-MEV literature on mechanism design mainly considered two types of agents; the users that express their economic preferences through submitting transactions combined with the associated transaction fees, and the block producers, whose utility was captured by the block reward and the collected fees \cite{Roughgarden20TFM}. As the block production supply chain evolved, new types of agents have emerged, which have rendered the design of efficient mechanisms more complicated. This, for example, is demonstrated by the impossibility result of  \cite{Bahrani24Post-MEV}. This impossibility result formalizes the necessity for designing mechanisms that allow users to better express their economic preferences, such as their privacy preferences in their transactions.  
In this work, we leverage ideas from Differential Privacy \cite{Dwork2014differential} that provides a mathematical framework to reason about the tradeoff between utility and privacy. Differential privacy has also been used as a solution concept in mechanism design \cite{McSherry07DP_MD} and subsequently, incentive compatible mechanisms for privacy-aware agents have been proposed \cite{Ghosh11SellingPrivacy,Nissim12Privacy}. In the blockchain literature, to the best of our knowledge, there are only two works \cite{Chitra22differential,Goyal23personalized} that use differential privacy in order to add privacy in users' trades on Constant Function Market Makers (CFMMs) \cite{Angeris20CFMMs}.

\paragraph*{Dutch Auctions.} Also known as descending price auctions, Dutch auctions are auctions in which an item is listed at a high price, which gradually decreases over time until a bidder accepts. Dutch auctions are used in a variety of applications in decentralized finance due to their simple design, efficiency, and low communication complexity. To name a few examples, Dutch auctions are used for liquidations in MakerDAO \cite{makerdao2020protocol}, for trading protocols such as UniswapX \cite{Adams2023uniswapx}, and have also been proposed to set transaction fees in EIP-2593 \cite{Finlay2020eip2593}. 
From a mechanism design perspective, Dutch auctions can be viewed as posted price mechanisms enhanced with competition. This competition, in turn, helps with increasing the generated revenue (compared to posted price mechanisms), while also making the auction practical and light in communication between the auctioneer and the bidders. That is because the communication is limited to the announced prices by the auctioneer and the winning bid, instead of aggregating all the bidders' bids.
Moreover, as the number of price levels increases, with properly chosen price levels, the revenue of the Dutch auction converges to the optimal one. We use tools from the rich literature on prophet inequalities and posted-price mechanisms, cf. \cite{Ehsani18prophet,Chawla2010SPP,Lucier17Prophets,Alaei19OptimalvsAnonymous}, to analyze a variant of Dutch auction in \Cref{sec:dutch_auction}.

Our work lies in the unexplored field of MEV redistribution mechanisms. These mechanisms enable users to capture the value their transactions create in the block production process.
Examples of such mechanisms are the proposal of MEV-burn \cite{Drake2023mev}, which is an add-on to an enshrined PBS framework. We remind that burning is implicitly a redistribution to the token holders. Another mechanism that lies in this category while also closely related to our work is the product of Flashbots, MEV-Share \cite{Miller23}. In this protocol, users can customize the visibility of their transaction data to searchers, but most notably specify the distribution of the searchers' payment. Lastly, \cite{Chionas24MEV-Sharing} study the dynamical behavior of a two-sided market, formed by users who create MEV and agents involved in the block production process who extract it. 

\subsection{Our Contributions}
As stated above, the ability of the agents involved in the block production process to extract value comes from monitoring the public content of pending users' transactions.
Our main contribution is to define and study economic mechanisms and analyze the welfare of \textit{privacy-aware} users who can configure the degree of privacy in their transactions.
Intuitively, we may view a user's utility for having her transaction executed as the sum of two components; a non-negative component that results from the new state of the blockchain and a negative component that results from the leakage of information about their economic preferences. We consider two auction-theoretic settings. First, in \cref{sec:privacy_auction}, we study the setting where a user auctions off her transaction to searchers while maintaining a degree of privacy in her transaction. The degree of privacy is captured by a differential privacy algorithm and the differential privacy parameter $\e$ which corresponds to the amount of noise added to the transaction. Naturally, the transaction becomes more valuable to searchers the lower the degree of privacy is. However, the user who values privacy suffers a higher \textit{privacy cost} as the degree of privacy drops. We characterize the optimal auction in this setting; that is, the \textit{truthful} auction that maximizes the user's net utility, which is defined as the revenue of the auction minus the privacy cost. Then, in \cref{sec:dutch_auction}, we describe an alternative practical mechanism which is a variant of the Dutch auction, whereby the user starts by offering her transaction at a high price with a high degree of privacy and gradually lowers the price while also reducing the degree of privacy. By carefully selecting the price tailored to the degree of privacy in each round, we parameterize the user's net utility as a function of the number of rounds of the auction, specifically, achieving a $1 - \frac{1}{e^{\ell}}$ approximation of the optimal user's net utility, with $\ell$ being the number of rounds. Subsequently, in \cref{sec:privacy_marketplace}, we extend this idea to a two-sided market - a privacy marketplace - where multiple users sell their transactions under their privacy preferences and unit-demand searchers buy them. In order for our two-sided market to satisfy desired properties, such as high welfare, incentive compatibility, and budget balance, we extend known techniques from combinatorial auctions in one-sided markets. Ultimately, we design a mechanism that offers searchers a take-it-or-leave-it price-privacy tuple for each transaction and then, the unit-demand searchers arrive sequentially and buy their utility maximizing transaction.
The mechanism results in a constant approximation of the optimal welfare, while satisfying the aforementioned properties.

\section{Setup}
We start by formalizing the privacy and the revenue maximization framework that will be used in the auctions described in the subsequent sections.

\paragraph*{Differential Privacy.}
In order to capture the tradeoff between utility and privacy, we employ the  framework of differential privacy \cite{Dwork2014differential}. The promise of differential privacy is that when an algorithm releases information from a dataset, one by just observing the output of the algorithm, cannot infer much about individual elements of the dataset. In particular, differential privacy states that, for two neighboring datasets, that is, datasets that differ in at most one element, the output from the mechanism satisfies the following property.

\begin{definition}[Differential Privacy]
A randomized algorithm $A$ is $\e$-differentially private if for all neighboring datasets $S,S'$ and all output sets $B$, it holds
\end{definition}
\begin{equation*}
    \Pr[A(S) \in B] \le \text{exp}(\e) \cdot \Pr[A(S') \in B] 
\end{equation*}
Here, $\e$ is the privacy parameter, whereby small values of $\e$ correspond to stronger privacy, while large values of $\e$ correspond to weaker privacy guarantees. Without loss of generality, we will follow the general convention in the differential privacy community by assuming that $\e \in [0,1]$.
Intuitively, each \edpm $ $ defines a \textit{Pareto} curve that trades off utility for privacy. In the context of blockchain applications, \cite{Chitra22differential} design differential private mechanisms tailored to CFMMs. The analogy of applying differential private mechanisms in CFMMs is that they induce a trade-off between price impact and privacy. In order to achieve differential private algorithms in CFMMs, the key idea in \cite{Chitra22differential} is to split a trade into small sub-trades and subsequently randomly permute the execution order of the sub-trades. In fact, randomly permuting the set of trades to be executed is one simple way to introduce entropy to the CFMM.
Nevertheless, we note that the main focus of this paper is the economic analysis of the auctions defined in the subsequent sections rather than the design of differential private mechanisms. Thus, we can use the algorithms of \cite{Chitra22differential} as building blocks when the users' transactions interact with CFMMs. 

\paragraph*{Optimal One-dimensional Mechanism Design.}
Consider a single item auction, with $n$ bidders. A bidder $i$ has a private value $v_i \in \mathbb{R}_{\ge0} $ for the item and a quasilinear utility, that is, if she wins, she receives the item, she pays $p_i$ and her utility is $v_i - p_i$, otherwise, her utility is 0. We assume a Bayesian model, where $v_i$ is drawn independently from a known distribution with cumulative distribution function (cdf) $F$ and probability density function (pdf) $f$. 
A bidding strategy $b_i$ for a bidder $i$ is a function that takes as input her private valuation $v_i$ and outputs an action $b_i(v_i )$, that is, a bid given $v_i$. Consider a bidding profile $\mathbf{b} = (b_1, \dots , b_n)$. Then, a mechanism is the tuple $\mech$, where $x_i(\mathbf{b})$ denotes the probability that bidder $i$ gets the item and $p_i(\mathbf{b})$ is her payment.
In order to formally describe the optimal auction, which is the auction that maximizes the auctioneer's expected revenue, we need to define the concept of virtual value function which was described in Myerson's seminal work \cite{Myerson81Optimal}.
\begin{definition}
For a valuation $v_i$ drawn from $F$, the virtual value function of agent $i$ is given by
\begin{equation*}
\phi(v_i) =  v_i - \frac{1- F(v_i)}{f(v_i)}
\end{equation*}
\end{definition}
Myerson's optimal mechanism is based on the following critical observation. 
\begin{theorem}[Myerson's Theorem~\cite{Myerson81Optimal}]
Consider a strategyproof auction that awards the item to buyer $i$ with probability $x_i(\mathbf{v})$ and charges $p_i(\mathbf{v})$ on bids $\mathbf{v}$. Then, the expected revenue is
\begin{equation}
\mathbb{E}_{\mathbf{v}}\left[{\sum_{i = 1}^n p_i(\mathbf v)} \right ] = \mathbb{E}_{\mathbf v}\left [{\sum_{i = 1}^n \varphi(v_i) \cdot x_i(\mathbf v)} \right ] 
\end{equation}

\end{theorem}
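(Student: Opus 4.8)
The plan is to reduce the revenue identity to Myerson's characterization of strategyproofness in single-parameter environments and then evaluate the resulting expectation by exchanging the order of integration. Since the auction is strategyproof, truthful reporting is optimal, so I may treat $x_i$ and $p_i$ as functions of the valuations themselves. First I would fix a bidder $i$ and condition on the reports $\mathbf{v}_{-i}$ of the others, so that the allocation and payment rules become functions $x_i(\cdot)$ and $p_i(\cdot)$ of a single real argument $v_i$. Writing the interim utility as $u_i(v_i) = v_i \, x_i(v_i) - p_i(v_i)$, the core observation is that strategyproofness is equivalent to (i) $x_i(\cdot)$ being monotone non-decreasing and (ii) the payment being determined by the allocation through an envelope identity.

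To establish (ii), I would start from the incentive constraints $v_i \, x_i(v_i) - p_i(v_i) \ge v_i \, x_i(z) - p_i(z)$ for all misreports $z$, together with the symmetric constraint obtained by swapping $v_i$ and $z$. Rewriting both in terms of $u_i$ and combining them yields $x_i(z)(v_i - z) \le u_i(v_i) - u_i(z) \le x_i(v_i)(v_i - z)$, which simultaneously forces monotonicity of $x_i$ and shows, upon dividing by $v_i - z$ and letting $z \to v_i$, that $u_i$ is differentiable almost everywhere with $u_i'(v_i) = x_i(v_i)$. Integrating and normalizing a zero-value bidder to zero utility gives $u_i(v_i) = \int_0^{v_i} x_i(z)\,dz$, and hence the payment identity
\[
p_i(v_i) = v_i \, x_i(v_i) - \int_0^{v_i} x_i(z)\,dz.
\]

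With the payment pinned down, the remaining step is a computation. I would take the expectation of $p_i(v_i)$ over $v_i \sim F$ (still conditioning on $\mathbf{v}_{-i}$) and rewrite the double integral $\int_0^\infty \big(\int_0^{v_i} x_i(z)\,dz\big)\, f(v_i)\,dv_i$ by Fubini over the region $0 \le z \le v_i$, which turns the inner integral into $\int_0^\infty x_i(z)\,(1-F(z))\,dz$. Substituting back and factoring out $f$ gives
\[
\mathbb{E}_{v_i}[p_i(v_i)] = \int_0^\infty \Big(v_i - \tfrac{1-F(v_i)}{f(v_i)}\Big)\, x_i(v_i)\, f(v_i)\,dv_i = \mathbb{E}_{v_i}\big[\varphi(v_i)\, x_i(v_i)\big],
\]
where $\varphi$ is the virtual value function. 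Taking the outer expectation over $\mathbf{v}_{-i}$ and summing over $i$ then yields the claimed identity.

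I expect the main obstacle to be the careful justification of the envelope step, namely that strategyproofness genuinely \emph{forces} the payment identity rather than merely being consistent with it: one has to handle the boundary normalization at $v_i = 0$, argue integrability and the almost-everywhere differentiability of $u_i$ (using that a pointwise maximum of affine functions is convex and hence differentiable outside a countable set), and invoke quasilinearity to ensure the single-crossing structure that makes monotonicity and the envelope condition equivalent to incentive compatibility. The Fubini exchange and the algebraic recognition of $\varphi$ are routine once the payment formula is in hand.
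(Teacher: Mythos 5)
Your proof is correct and follows the standard Myerson argument; the paper does not prove this theorem itself (it is cited), but its proof of the analogous privacy-cost lemma in Section~3 uses exactly the same route --- payment formula, exchange of the order of integration, and recognition of the virtual value $\varphi$. The only cosmetic difference is that you derive the payment identity $p_i(v_i) = v_i\,x_i(v_i) - \int_0^{v_i} x_i(z)\,dz$ from the incentive constraints and then apply Fubini directly, whereas the paper takes the equivalent form $p_i(\mathbf{v}) = \int_0^{v_i} z\, x_i'(z,\mathbf{v}_{-i})\,dz$ as its starting point and performs an explicit integration by parts.
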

The right-hand side is called the expected {\em virtual welfare}. For cases where $F$ is regular, i.e., the virtual value function $\varphi$ is non-decreasing, the optimal auction maximizes expected virtual welfare.

\section{Selling Privacy at Auction}
\label{sec:privacy_auction}
In this section, we study how a user can auction off her transaction to searchers while maintaining a degree of privacy in her transaction. In order to model this auction, the user employs a differential privacy algorithm (such as the ones described in \cite{Chitra22differential} if her transaction interacts with a CFMM) and the differential privacy parameter $\e$ which calibrates the degree of privacy.
There are $n$ searchers that seek to execute the user's transaction. We consider a Bayesian model. Each searcher $i$'s valuation $X$ for the user's transaction is drawn from a known distribution with a marginal distribution for $v$ conditioned on a given privacy parameter $\e$, denoted by $F_{\e}(v) =  P[X \le v|\e]$.
We naturally assume that a lower degree of privacy (i.e., higher values of $\e$) increases searchers' valuations since they learn more about the user's transaction, 
and we model this by assuming hazard rate order; that is, for $\e_1\le\e_2$ and any $v$, it holds $\frac{f_{\e_1}(v)}{1- F_{\e_1}(v)} \ge \frac{f_{\e_2}(v)}{1- F_{\e_2}(v)}$. 
On the other hand, a low degree of privacy yields an increase in the user's \textit{disutility} for having information leaked about her transaction. In our model, this is captured by the user's privacy cost. We define the privacy cost as the cost function $c : [0,1] \to \Rplus$, where $c(\e)$ captures the user's disutility for having her transaction used in a $\e$-differentially private manner.
We consider convex, non-decreasing (in $\e$) cost functions.\footnote{We also assume the privacy cost $c(\e)$ can be measured in monetary terms and thus, it is denominated in the same unit as the revenue of the auction.}
Our goal is to study the mechanisms that maximize or approximate the net utility of the privacy-aware user:
\begin{equation}
\label{eq:utility_privacy_aware}
    u_0(\e, n) = \text{Rev}(\e, n) - c(\e),
\end{equation}
where $\text{Rev}(\e, n)$ is the expected revenue of the auction with $n$ searchers by selling the transaction in an $\e$-differentially private manner. Intuitively, we may think of the privacy parameter $\e$ from the bidders' perspective as selling an item at several quality levels. From the user's perspective, who acts as the auctioneer, an increase in $\e$ yields a higher privacy cost. In fact, the privacy cost can be thought of as the cost of offering an item at a higher quality level. In our case, the user trades off revenue against the cost of leaking privacy in her transaction. 


\paragraph{Discrete Privacy Levels.} From now on, for our modeling purposes, we will consider $\ell$ discrete privacy levels $\e_1 < \e_2 < \dots < \e_{\ell}$ in the interval $[0,1]$ instead of the continuous interval. We will reserve the usage of letter $k$ to denote a privacy level $k \in [\ell]$.
Moreover, in order to avoid burdensome notation, we will denote the marginal valuation distribution $F(v| \e= \e_k)$ as $F_k(v)$ and similarly, we will write $c_k$ to denote the privacy cost the user suffers at privacy level $\e_k$, where $k \in [\ell]$.

\paragraph*{Optimal Mechanism.}
\label{par:optimal_mechanism}
Following \cite{Myerson81Optimal}, we extend the notion of virtual values in order to incorporate the various privacy levels, privacy costs and hence, for a privacy level $k \in [\ell]$, we define the privacy-enhanced virtual value function $\bphi_k$
\begin{equation}
\label{eq:privacy-virtual-value}
    \bphi_{k}(v) = v - \frac{1 - F_{k}(v)}{f_{k}(v)} - c_k
\end{equation}
\begin{lemma}
Consider any mechanism $\mech$ in which any searcher who bids zero gets a zero payoff. Then, the user's expected net utility under the mechanism $\mech$ is given by:
\begin{equation}
    \mathbb{E}\left[ \sum\limits_{i=1}^n \sum\limits_{k=1}^{\ell} \bphi_{k}(\mathbf{b})  x_{i,k}(\mathbf{b})  \right]
\end{equation}
\end{lemma}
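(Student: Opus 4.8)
The plan is to mirror Myerson's derivation, treating the privacy level as an additional ``quality'' dimension of the single item, and to show that the per-level privacy cost gets absorbed directly into the virtual value. The starting point is to split the user's net utility into its two components, expected revenue and expected privacy cost:
\begin{equation*}
u_0 = \E\left[\sum_{i=1}^n p_i(\mathbf{b})\right] - \E\left[\sum_{i=1}^n\sum_{k=1}^\ell c_k\, x_{i,k}(\mathbf{b})\right].
\end{equation*}
The second term is the expected privacy cost: since the user only leaks information (and so only incurs $c_k$) when the transaction is actually executed at level $k$, and $\sum_{i,k} x_{i,k}(\mathbf{b})$ is the probability that the item is sold, the cost $c_k$ is weighted by the total allocation mass placed at level $k$, namely $\sum_i x_{i,k}(\mathbf{b})$.

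The core of the argument is to rewrite the revenue term using virtual values. First I would reduce to a per-level single-parameter problem: conditioning on the realized privacy level $k$, the searchers' valuations are drawn from $F_k$, the allocation is $x_{i,k}$, and the hypothesis that a searcher bidding zero obtains zero payoff supplies exactly the boundary condition $u_i(0)=0$ needed to pin down payments. Applying the payment identity and envelope characterization underlying Myerson's Theorem to each level (invoking the revelation principle so that a general mechanism may be analyzed through its truthful, equilibrium form) yields
\begin{equation*}
\E\left[\sum_{i=1}^n p_i(\mathbf{b})\right] = \E\left[\sum_{i=1}^n\sum_{k=1}^\ell \phi_k(b_i)\, x_{i,k}(\mathbf{b})\right], \qquad \phi_k(v) = v - \frac{1-F_k(v)}{f_k(v)}.
\end{equation*}
Substituting this into the decomposition and combining the two sums term by term gives
\begin{equation*}
u_0 = \E\left[\sum_{i=1}^n\sum_{k=1}^\ell \bigl(\phi_k(b_i) - c_k\bigr)\, x_{i,k}(\mathbf{b})\right] = \E\left[\sum_{i=1}^n\sum_{k=1}^\ell \bphi_k(b_i)\, x_{i,k}(\mathbf{b})\right],
\end{equation*}
which is the claimed identity, since $\bphi_k(v) = \phi_k(v) - c_k$ by the definition of the privacy-enhanced virtual value.

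I expect the main obstacle to be the revenue characterization itself, i.e. establishing the virtual-value identity cleanly across multiple privacy levels. The delicacy is that a searcher's valuation is level-dependent, so the ambient type space is a priori multi-dimensional; the argument must either fix the level within each realization (reducing to single-dimensional Myerson level-by-level and then averaging over the possibly bid-dependent choice of level), or apply the envelope theorem in the value direction while holding the level structure fixed. I would also take care to use the zero-payoff-for-zero-bid assumption precisely where the integration-by-parts boundary term appears, since this is the only hypothesis standing in for a full individual-rationality or strategyproofness requirement, and to verify that the expected privacy cost is attributed to the allocation mass $\sum_i x_{i,k}(\mathbf{b})$ at each level rather than charged unconditionally.
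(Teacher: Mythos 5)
Your proposal is correct and follows essentially the same route as the paper: reduce to a per-level single-parameter Myerson problem, use the zero-bid/zero-payoff hypothesis as the boundary condition for the payment identity, rewrite expected revenue as expected virtual welfare via integration by parts, and fold the privacy cost $c_k$ into the virtual value to obtain $\bphi_k$. If anything, your explicit attribution of the cost to the allocation mass $\sum_i x_{i,k}(\mathbf{b})$ is slightly cleaner than the paper's shorthand $\mathbb{E}[p_i(\mathbf{v})-c_k]$, which only attaches $c_k$ to the allocation in the following line.
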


\begin{proof}
The proof follows directly from Lemma 3 in \cite{Myerson81Optimal}, with minor differences due to the inclusion of privacy costs, but we provide the proof for completeness. We considered a fixed privacy level $k$. And now, we start by Myerson's payment formula for searcher $i$:
\begin{equation}
\label{eq:myerson-formula}
    p_i(\vals) = \int_0^{v_i} z \cdot x_{i,k}'(z, \valsmi) dz
\end{equation}
We can now write the expected payment by searcher $i$ for a given value profile $\valsmi$ as:
\begin{equation}
    \mathbb{E}_{v_i \sim F_k}[p_i(\vals)] = \int_0^{v_{max}}p_i(\vals)f_k(v_i)dv_i = \int_0^{v_{max}}\left[\int_0^{v_i}z\cdot x_{i,k}'(z, \valsmi)dz\right]f_k(v_i)dv_i
\end{equation}
Next, we reverse the order of integration, which leads to:
\begin{align*}
 \int_0^{v_{max}}\left[\int_0^{v_i}z\cdot x_{i,k}'(z, \valsmi)dz\right]f_i(v_i)dv_i & =   \int_0^{v_{max}}\left[\int_z^{v_{max}} f_k(v_i)dv_i \right]z\cdot x_{i,k}'(z, \valsmi)dz \\
 & = \int_0^{v_{max}}(1- F_k(z)) \cdot z \cdot x_{i,k}'(z, \valsmi)dz
\end{align*}
Next, we integrate by parts:
\begin{equation*}
\begin{array}{l}
   \int_0^{v_{max}} (1- F_k(z)) \cdot z \cdot x_{i,k}'(z, \valsmi)dz \\
   = ((1-F_k(z)) \cdot z \cdot x_{i,k}(z, \valsmi)) \Big|^{v_{max}}_0  
     - \int_0^{v_{max}} x_{i,k}(z, \valsmi) \cdot \big((1-F_k(z)) - z f_k(z)\big) dz \\
   = \int_0^{v_{max}} \left(z - \frac{1 - F_k(z)}{f_k(z)} \right) 
     x_{i,k}(z, \valsmi) f_k(z) dz = \int_0^{v_{max}} \phi_k(z)  x_{i,k}(z, \valsmi) f_k(z) dz 
     \\ = \mathbb{E}_{v_i \sim F_k} \left[\phi_k(v_i) \cdot x_{i,k}(v) \right]
\end{array}
\end{equation*}
Thus, the user's expected net utility for selling the transaction at privacy level $k$ is
\begin{equation*}
    \mathbb{E}_{v_i \sim F_k}[p_i(\vals) - c_k] = \mathbb{E}_{v_i \sim F_k} \left[\left(\phi_k(v_i) - c_k\right) \cdot x_{i,k}(v) \right] = \mathbb{E}_{v_i \sim F_k} \left[\bphi_k(v) \cdot x_{i,k}(v) \right]
\end{equation*}
Lastly, by applying the linearity of expectations twice, we obtain the desired outcome:
\begin{align*}
\mathbb{E}_{\vals}\left[\sum_{i=1}^n\sum_{k=1}^{\ell}(p_i(\vals) - c_k)\right] & =  \sum_{i=1}^n \sum_{k=1}^{\ell}\mathbb{E}_{\vals}\left[p_i(\vals) - c_k \right] = \sum_{i=1}^n \sum_{k=1}^{\ell} \mathbb{E}_{\vals} \left[\bphi(v_i) \cdot x_i(\vals)\right] \\
& = \mathbb{E}_{\vals}\left[\sum_{i=1}^n \sum_{k=1}^{\ell} \bphi_{k}(\vals) \cdot x_{i,k}(\vals) \right]
\end{align*}

\end{proof}

\paragraph*{Truthful Payment of the Optimal Mechanism.}
As stated in the previous paragraph, the winner in the optimal mechanism is the privacy-enhanced virtual value function maximizer searcher $i^*$, that is, 
\begin{equation*}
    (i^*,k^*) = \arg\max\limits_{(i,k)} \left\{\bphi_{{k}}(v_i)\right\} \, .
\end{equation*}
In order now to design a truthful mechanism, we use the standard notion of critical payments, which means that the winning searcher should pay the minimum amount that she could have bid so that she would remain the winner. To do so, we find the searcher with the second highest privacy-enhanced virtual valuation, and let 
\begin{equation*}
    R = \max\limits_{(i,k)\neq (i^*,k^*)}\left\{\bphi_{{k}}(v_i)\right\} \, .
\end{equation*}
Then, the winner $i^*$ pays the unique price $p$ by solving
\begin{equation*}
\label{eq:threshold_payment}
\bphi_{{k^*}}(p) = R \implies p = \bphi^{-1}_{{k^*}}(R) \, .
\end{equation*}

\section{Dutch Auction}
\label{sec:dutch_auction}
The optimal auction described in \cref{sec:privacy_auction} is a sealed-bid auction, which requires from every searcher to submit his valuation for each privacy level.
We now describe a more practical and light in communication auction that can be viewed as a variant of a Dutch auction. Notably, Dutch auctions have reduced communication complexity compared to sealed-bid auctions, since the communication is limited to the announced prices by the auctioneer and the winning bid, instead of aggregating all the bidders' bids. In general, a user does not know or cannot estimate a priori the value of their transaction to searchers. Thus, we describe an iterative and communication-efficient process, which enables the user to discover and hence internalize the value of their transaction.
In this auction, the user starts by offering her transaction at a high price and with a high degree of privacy, and in each subsequent round, the price and the degree of privacy decrease. The auction is not a standard Dutch auction, but it resembles one because, as the auction unfolds, the transaction becomes more valuable to searchers.
The auction terminates after $\ell$ rounds or when a searcher is willing to buy the transaction at that price and privacy level. In other words, searcher $i$ is willing to buy if her valuation $v_i(\e)$ exceeds the announced price at the current privacy level. We would like to approximate the revenue of the following auction as a function of the number $\ell$ of the prices announced by the auctioneer.

In this new auction format, the user commits to $\ell$ price-privacy levels. In other words, the user chooses $\ell$ tuples of price and privacy level $(p_1, \e_1), (p_2, \e_2), \dots , (p_{\ell}, \e_{\ell})$. We would like to show how well the $\ell$-level Dutch auction approximates the optimal mechanism described in the previous section. Our main question is how well we can approximate the optimal net utility \eqref{eq:utility_privacy_aware} by using at most $\ell$ distinct tuples of price and privacy levels. We now describe the $\ell$-Dutch auction. 

\subsection{Dutch Auction with Privacy Cost}
We will reduce the problem of selecting price-privacy tuples for the $\ell$-DA into a version of Prophet Inequality. A relation that was observed in \cite{Alaei22DescendingDescrete} for the single item auction.

\begin{definition}[Batched Prophet Inequality \cite{Alaei22DescendingDescrete}]
Consider a decision-maker that wants to maximize her expected reward in a sequential game with $\ell$ rounds. She commits to picking $\ell$ thresholds, $\tau_1 > \tau_2 > \dots > \tau_{\ell}$.
There are $n$ rewards $V_1, V_2, \dots, V_n$ drawn from a known distribution $G$. In each round $k$, if all the rewards are less than the threshold $\tau_{k}$, the decision-maker proceeds to the next round; otherwise, she picks uniformly at random one of the rewards that are greater than the threshold $\tau_k$. If no reward passes any of the thresholds until the end of round $\ell$, then the decision-maker's reward is zero.
\end{definition}

We highlight the connection between Batched Prophet Inequality and revenue maximization. Consider now the single item auction with the bidders' valuations being drawn from $F$. Let $\text{REV-OPT}(v)$ be the optimal revenue (in expectation). Let $V_i = \phi(v_i)$ be the reward in the Batched Prophet Inequality; hence $G(x) = F(\phi^{-1}(x))$. 
Thus, 
\begin{equation}
\label{eq:batch-to-rev}
      \OPT = \mathbb{E} \left[\max \left\{\max_{i \in [n]}V_i, 0\right\} \right] =   \mathbb{E} \left[\max \left\{\max_{\substack{i \in [n] \\ k \in [\ell]}}\phi_{k}(v_i), 0\right\} \right] = \text{REV-OPT}(v)
\end{equation}
We will now generalize the process in the setting of \cite{Alaei22DescendingDescrete} by considering the user's privacy cost.

\paragraph*{$\ell$-Dutch Auction ($\ell$-DA).} 
The user commits to $\ell$ price-privacy tuples. The privacy levels are fixed and let them be $\e_1 < \e_2 < \dots < \e_{\ell}$.
The user chooses $\ell$ prices $p_1,p_2, \dots , p_{\ell}$, each corresponding to the respective privacy level, such that the sequence of the privacy-enhanced virtual values \eqref{eq:privacy-virtual-value} is sorted in decreasing order, that is,  $\bphi_{1}(p_1) > \bphi_{2}(p_2)> \dots > \bphi_{{\ell}}(p_{\ell})$. We now describe the auction, where the price-privacy levels are restricted to a finite set $\mathcal{T}=\{(p_1, \e_1), (p_2, \e_2), \dots , (p_{\ell}, \e_{\ell})\}$ of cardinality $\ell$. We ask each searcher to submit a bid $b_k \in  \mathcal{T}$. The winner is the searcher with the highest privacy-enhanced virtual value (breaking ties uniformly at random). And the winning searcher pays their submitted bid. 

\begin{corollary}[Decreasing Prices to Decreasing Privacy Enhanced Virtual Values]
A privacy level sequence $\e_1 < \e_2 < \dots \e_{\ell}$, and a decreasing price sequence $p_1>p_2> \dots > p_{\ell}$ result in a decreasing privacy enhanced virtual value sequence $\bphi_{1}(p_1) > \bphi_{2}(p_2)> \dots > \bphi_{{\ell}}(p_{\ell})$ if 
\begin{equation*}
    c_{k+1} - c_{k} > \sup\limits_{v \sim F_k} \left[ \phi_{{k+1}}(v) - \phi_{{k}}(v) \right], \quad \text{for } k=1,2, \dots , \ell-1
\end{equation*}
\end{corollary}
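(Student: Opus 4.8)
The plan is to prove the full chain $\bphi_{1}(p_1) > \bphi_{2}(p_2) > \dots > \bphi_{\ell}(p_\ell)$ by verifying it one consecutive pair at a time, i.e.\ by showing $\bphi_{k}(p_k) > \bphi_{k+1}(p_{k+1})$ for every $k \in \{1,\dots,\ell-1\}$ and then chaining the resulting strict inequalities. Writing $\bphi_{k}(v) = \phi_{k}(v) - c_k$ as in \eqref{eq:privacy-virtual-value}, the pairwise goal unfolds to
\[
\phi_{k}(p_k) - c_k \;>\; \phi_{k+1}(p_{k+1}) - c_{k+1}.
\]

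The key difficulty is that the hypothesis bounds $\phi_{k+1}(v) - \phi_{k}(v)$ only when the two virtual value functions are evaluated at a \emph{common} argument, whereas the inequality above compares them at the two different prices $p_k$ and $p_{k+1}$; so the first step is to align the arguments. Since the prices are decreasing we have $p_{k+1} < p_k$, and since (as throughout) each $F_k$ is regular, the function $\phi_{k+1}$ is non-decreasing, whence $\phi_{k+1}(p_{k+1}) \le \phi_{k+1}(p_k)$. This gives
\[
\bphi_{k+1}(p_{k+1}) = \phi_{k+1}(p_{k+1}) - c_{k+1} \;\le\; \phi_{k+1}(p_k) - c_{k+1},
\]
so it suffices to establish the stronger inequality $\phi_{k}(p_k) - c_k > \phi_{k+1}(p_k) - c_{k+1}$, in which both virtual values are now evaluated at $p_k$. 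Rearranging this yields $c_{k+1} - c_k > \phi_{k+1}(p_k) - \phi_{k}(p_k)$, and since $\phi_{k+1}(p_k) - \phi_{k}(p_k) \le \sup_{v \sim F_k}\left[\phi_{k+1}(v) - \phi_{k}(v)\right]$ by definition of the supremum, the conclusion follows immediately from the stated hypothesis. Closing each consecutive step closes the whole chain.

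The computation is routine; the only genuine decision is the decomposition, namely realizing that one should replace $\phi_{k+1}(p_{k+1})$ by the larger quantity $\phi_{k+1}(p_k)$ — a move licensed precisely by regularity together with the decreasing price order — so that the pointwise supremum bound becomes applicable. As a consistency check I would note that the modeling assumption of hazard rate order already forces $\phi_{k+1}(v) \le \phi_{k}(v)$ pointwise, so the supremum appearing on the right-hand side of the hypothesis is in fact non-positive; combined with the convex, non-decreasing cost $c$ this makes the cost-gap condition easy to satisfy. The argument above, however, does not rely on this observation and holds for the sufficient condition exactly as stated.
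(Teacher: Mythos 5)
Your argument is correct, and since the paper states this corollary without any proof (offering only the informal remark that the incremental privacy cost must dominate the virtual-value increase), your write-up supplies exactly the formalization that remark gestures at: align the two virtual values at the common argument $p_k$ using monotonicity of $\phi_{k+1}$ together with $p_{k+1}<p_k$, and then apply the pointwise supremum bound from the hypothesis. The one assumption you invoke that the corollary does not state explicitly is regularity of each $F_k$ (so that $\phi_{k+1}$ is non-decreasing); the paper relies on this implicitly elsewhere --- for instance when it inverts $\bphi_{k}$ to recover the thresholds $\htau_k$ --- so importing it is fair, but it is worth noting that the step $\bphi_{k+1}(p_{k+1})\le \phi_{k+1}(p_k)-c_{k+1}$ genuinely needs it. Your closing ``consistency check'' is also more significant than you suggest: under the hazard-rate order exactly as the paper writes it, namely $f_{\e_1}(v)/(1-F_{\e_1}(v))\ge f_{\e_2}(v)/(1-F_{\e_2}(v))$ for $\e_1\le\e_2$, one indeed gets $\phi_{k+1}(v)\le\phi_{k}(v)$ pointwise, so the supremum on the right-hand side is non-positive and the sufficient condition is satisfied by any strictly increasing cost sequence; but this contradicts the paper's own gloss immediately after the corollary, which asserts that increasing $\e$ \emph{raises} the standard virtual value. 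One of the two (the direction of the hazard-rate inequality in the model, or the intuition paragraph) therefore contains a sign slip. Your proof is valid for the condition exactly as stated in either case.
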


Intuitively, this means that lowering the degree of privacy (i.e., increasing $\e$) raises the standard virtual function $\phi$ because the valuation distribution improves because of the hazard rate order. In order to keep $\bphi$ falling with the degree of privacy, the incremental privacy cost must dominate the virtual-value increase. In other words, the marginal privacy cost should exceed the marginal increase in the (standard) virtual value.

\paragraph*{Analysis of $\ell$-DA.}
It is straightforward to see that the mechanism is monotone both in price and the privacy level. To be more precise, in Bayes-Nash Equilibrium (BNE), the searcher's bid $b_k = (p_k, \e_k)$ will be an increasing step function with discontinuities at certain thresholds $\htau_1> \htau_2> \dots > \htau_{\ell}$, which we define next. More specifically, for a privacy level $k$, if the searcher's true valuation belongs to the interval $[\htau_k, \htau_{k-1}]$, she will bid $b_k = (p_k, \e_k)$.
Intuitively, the interval $[\htau_k, \htau_{k-1}]$ depends on the number of searchers $n$ and the number of rounds $\ell$. More specifically, as $n$ increases, searchers will bid more aggressively in BNE, while the length of each interval $[\htau_k, \htau_{k-1}]$ decreases as $\ell$ increases. As stated above, the allocation rule is piecewise monotone. In particular, in BNE, the searcher will bid $b_k = (p_k, \e_k)$ if her true valuation belongs to an interval $[\htau_k, \htau_{k-1}]$, and she will $b_{k+1} = (p_{k+1}, \e_{k+1})$ if her true valuation belongs to the interval $[\htau_{k+1}, \htau_{k}]$. Thus, the searcher will be indifferent between bidding $b_k$ and $b_{k+1}$ if her true valuation is $v_k= \htau_k$. By equating the searcher's net utility in these two cases, for $k=1,2, \dots , \ell-1$, we get:
\begin{equation}
\label{eq:indifference}
    P_n \left(\frac{F_{k}(\htau_k)}{F_{{k-1}}(\htau_{k-1})}\right) \cdot (\htau_k - p_k) = \left(\frac{F_{k}(\htau_k)}{F_{{k-1}}(\htau_{k-1})}\right)^{n-1} P_n \left(\frac{F_{{k+1}}(\htau_{k+1})}{F_{{k}}(\htau_{k})}\right) \cdot (\htau_k - p_{k+1})
\end{equation}
where for any probability $x \in [0,1]$, $P_n(x)$ is defined as:
\begin{equation}
    P_n(x) = \sum\limits_{i=0}^{n-1}\frac{1}{i+1}\binom{n-1}{i}x^{n-1-i}(1-x)^i = \frac{1}{n}\cdot \frac{1-x^n}{1-x}
\end{equation}

We now turn to the batched-prophet inequality. 

\begin{lemma}
\label{lem:batched-ratio}
The following sequence of thresholds $\tau_1 >\tau_2 > \dots > \tau_{\ell}$ achieves a $1 - \frac{1}{e^{\ell}}$ approximation of the optimal reward $\OPT$:
\begin{align*}
    \tau_1= G^{-1} \left( \left(\frac{1}{e}\right)^{\frac{1}{n}}\right), \tau_2= G^{-1} \left( \left(\frac{1}{e^2}\right)^{\frac{1}{n}}\right), \dots, \tau_{\ell}= G^{-1} \left( \left(\frac{1}{e^{\ell}}\right)^{\frac{1}{n}}\right).
\end{align*}
\end{lemma}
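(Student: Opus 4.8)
The plan is to exploit the one property that singles out the thresholds $\tau_k=G^{-1}(e^{-k/n})$: since the $n$ rewards are i.i.d.\ from $G$, the probability that \emph{all} of them lie below $\tau_k$ is exactly $G(\tau_k)^n=(e^{-k/n})^n=e^{-k}$. Writing $q_k:=e^{-k}$ (and $q_0:=1$) for this probability, the chance of reaching round $k$ is $q_{k-1}$ and, decisively, the conditional probability of passing round $k$ once it is reached is $q_k/q_{k-1}=e^{-1}$, the same at every round. First I would record these elementary facts, since they are exactly why the guarantee comes out as a clean power of $e^{-1}$.

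Next I would write both $\ALG$ and $\OPT$ through the same round-by-round recursion. Conditioned on reaching round $k$, with probability $1-e^{-1}$ some reward clears $\tau_k$ and the game stops, collecting in expectation the band value $\alpha_k=\E[V\mid \tau_k<V\le\tau_{k-1}]$; otherwise it continues. Hence $\ALG=(1-e^{-1})\sum_{k=1}^{\ell}e^{-(k-1)}\alpha_k$. Putting $M=\max_i V_i$, the benchmark decomposes identically, $\OPT=\E[\max\{M,0\}]=(1-e^{-1})\sum_{k=1}^{\ell}e^{-(k-1)}B_k+e^{-\ell}O_{\ell+1}$, where $B_k=\E[\max\{M,0\}\mid \tau_k<M\le\tau_{k-1}]$ and $O_{\ell+1}=\E[\max\{M,0\}\mid M\le\tau_\ell]$.

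The aim is then to show that the shortfall $\OPT-\ALG$ is carried entirely by the bottom event $\{M\le\tau_\ell\}$, whose probability is precisely $q_\ell=e^{-\ell}$. The clean version of this idea is transparent when the reward kept in each round is the largest eligible one: there the policy collects $M$ on $\{M>\tau_\ell\}$ and nothing otherwise, so $\OPT-\ALG=\E[\max\{M,0\}\,\mathbf{1}\{M\le\tau_\ell\}]=e^{-\ell}O_{\ell+1}$, and the monotonicity $O_{\ell+1}=\E[\max\{M,0\}\mid M\le\tau_\ell]\le\E[\max\{M,0\}]=\OPT$ (conditioning on small values cannot increase the mean) gives $\OPT-\ALG\le e^{-\ell}\OPT$, i.e.\ $\ALG\ge(1-e^{-\ell})\OPT$.

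I expect the delicate point, and the main obstacle, to be that the decision-maker keeps a \emph{uniformly random} eligible reward rather than the maximizer, so the collected value is the band-mean $\alpha_k$ and a term-by-term comparison of $\alpha_k$ against $B_k$ fails (a single random exceeder need not match the maximum, which is severe under heavy tails). To rescue the telescoping I would use the shifted domination $\alpha_k\ge\tau_k\ge B_{k+1}$ for the interior bands, so that the weight $e^{-(k-1)}$ that $\ALG$ places on band $k$ is charged against the weight $\OPT$ places on band $k+1$, leaving only the geometric residue $e^{-\ell}$; the unbounded top band (round $1$), where $\alpha_1$ can fall well below $B_1$, would instead be handled by the single-threshold estimate $1-G^n\le n(1-G)$ together with $n(1-e^{-1/n})\le1$, exactly as in the $\ell=1$ case. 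Verifying that these pieces assemble to leave precisely the $e^{-\ell}\OPT$ slack, using the geometric spacing of the $\tau_k$ in full, is the crux of the proof.
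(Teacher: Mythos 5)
You have correctly isolated the one property that makes these thresholds work --- $G(\tau_k)^n=e^{-k}$, so each round is passed with conditional probability exactly $e^{-1}$ --- and your round-by-round decomposition of $\ALG$ into band contributions $(1-e^{-1})\sum_k e^{-(k-1)}\alpha_k$ is sound. But the proof is not complete, and the missing piece is exactly the one you flag as "the crux": how to charge each round's conditional gain against $\OPT$. Your band-shifting device $\alpha_k\ge\tau_k\ge B_{k+1}$ covers the bands $2,\dots,\ell+1$ of $\OPT$ but leaves the top band $(1-e^{-1})B_1$ entirely unmatched, and in heavy-tailed instances that band can carry essentially all of $\OPT$ while $\alpha_1$ (the mean of a \emph{uniformly random} exceeder of $\tau_1$) is much smaller than $B_1$ (the mean of the \emph{maximum}). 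The patch you gesture at --- "the single-threshold estimate, exactly as in the $\ell=1$ case" --- is precisely the prophet-inequality surplus argument, and invoking it to cover only $B_1$ runs back into the same random-exceeder-versus-maximum comparison; invoking it in full gives $(1-e^{-1})\alpha_1\ge(1-e^{-1})\OPT$, at which point the band decomposition of $\OPT$ is no longer doing any work.

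The paper's proof sidesteps the band comparison entirely. For each round $k$ it lower-bounds the conditional expected gain by
$\bigl(1-G(\tau_k)^n\bigr)\tau_k+P_n\bigl(G(\tau_k)\bigr)\sum_{i=1}^n\E[(V_i-\tau_k)^+]\ \ge\ \bigl(1-\tfrac1e\bigr)\Bigl(\tau_k+\sum_{i=1}^n\E[(V_i-\tau_k)^+]\Bigr)\ \ge\ \bigl(1-\tfrac1e\bigr)\OPT$,
where the first inequality uses that both $1-G(\tau_k)^n$ and $P_n(G(\tau_k))$ are at least $1-1/e$ at these thresholds, and the second is the standard prophet-inequality surplus bound against the \emph{entire} benchmark $\E[\max_i V_i^+]$, not one band of it. The recursion then multiplies by the reaching probability $e^{-(k-1)}$ and the geometric series gives $(1-e^{-\ell})\OPT$. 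To repair your write-up you should replace the telescoping over bands of $M$ by this per-round surplus bound; the facts you already established (equal conditional passing probability $e^{-1}$, selection probability $P_n\ge 1-1/e$) are exactly the ingredients it needs.
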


The proof, following \cite{Alaei22DescendingDescrete}, relies on constructing a sequence of thresholds that balances the tradeoff between having a high reward from the current round and having a high probability of collecting future rewards. 
For completeness, we give the whole proof in \Cref{sec: omitted_proofs}.
In order now to connect the Batch Prophet Inequality with the user's net utility maximization in $\ell$-DA, we use \eqref{eq:batch-to-rev}, and thus, we retrieve the equilibrium thresholds $\htau_1, \htau_2, \dots , \htau_{\ell}$ as follows:
\begin{align*}
    \htau_1= \bphi_{1}^{-1}(\tau_1), \; \htau_2= \bphi_{2}^{-1}(\tau_2), \; \dots, \; \htau_{\ell}= \bphi_{{\ell}}^{-1}(\tau_{\ell}).
\end{align*}
Lastly, we can transform the equilibrium thresholds $\htau_1, \htau_2, \dots , \htau_{\ell}$ to the announced prices $p_1, p_2, \dots, p_{\ell}$, by setting $p_{\ell}=\htau_{\ell}$ and then using \eqref{eq:indifference} inductively. Thus, we ended up with $\ell$ price-privacy tuples that achieve $1- \frac{1}{e^{\ell}}$ approximation of the optimal revenue, assuming that the privacy levels are  partitioned in $[0,1]$.

\section{Privacy Marketplace}
\label{sec:privacy_marketplace}

In this section, we extend the previous one-sided market of having an individual user selling her transaction under her privacy preferences to a two-sided market - a \textit{privacy marketplace} - whereby multiple users sell their transactions under their privacy preferences and \textit{unit-demand} searchers buy the right to execute them. In the middle, there is the \textit{matchmaker} who aggregates users' privacy preferences and searchers' valuations for the users' transactions. 
Importantly, in the two-sided market that we model, we assume that each user (who acts as a seller) maintains a private outside-option value for retaining her transaction. In that regard, unsold transactions retain their privacy utility and can still be executed elsewhere, capturing the opportunity cost. 
Therefore, our benchmark will be the social welfare of the market which is the sum of the searchers' welfare from the transactions bought and the users' welfare from retaining the unsold transactions.

\paragraph*{Model.}
We now define the formal structure of the two-sided market. There are $m$ users, each one selling her transaction and $n$ unit-demand searchers that want to buy users' transactions. From now on, we use the letter $i$ to denote a single searcher and the letter $j$ to denote a single user or her associated transaction. Each user $j$ has a single transaction that can be sold under different privacy preferences, which are defined by  a differential privacy parameter $\e_j \in [0,1]$ and a privacy cost function $c_j: [0,1] \to \Rplus$. As in the previous setting, for our modeling purposes, we discretize the privacy interval into $\ell$ privacy levels, namely $k=\{1,2 \dots , \ell\}$. Thus, we will write the user $j$'s privacy cost of selling her transaction at privacy level $k$, as $c_{j,k}$. 
For each user $j$, if her transaction is sold at privacy level $k$ her utility is the amount received by the matchmaker minus the incurred privacy cost $c_{j,k}$. If the transaction is not sold, user $j$ captures the opportunity cost $c_{j,k}$ as explained in the previous paragraph.  
On the other side of the market, each searcher $i$ is assumed to be unit-demand with normalized valuations, which means each searcher is interested in purchasing at most one transaction and gets no value if she is not allocated any transactions.\footnote{Searchers compete to buy a pending transaction and bundle it with their own to capture arbitrage opportunities, liquidations or "sandwiches".} 
Each searcher $i$ has a quasilinear utility for getting transaction $j$ which is equal to her private valuation for transaction $j$ at the respective privacy level minus the price paid.
As in the one-sided market studied in the previous section, we consider a Bayesian setting, where there is public knowledge of searchers' valuation distribution and users' privacy cost distributions. Thus, we consider a user $j$ of having a type: 

\begin{equation*}
    t_j = (c_{j,1}, c_{j,2}. \dots , c_{j,\ell}) 
\end{equation*}
where each coordinate $c_{j,k}$ is her privacy cost for selling her transaction at privacy level $k$. The privacy costs distribution of the users are independent and publicly known to the matchmaker, that is, 
\begin{equation*}
    G(t_1, \dots , t_m) = \prod_{j=1}^m \prod_{k=1}^{\ell}G_{j,k}(c_{j,k})
\end{equation*}
where, each marginal $G_{j,k}$ is the known CDF of user j's privacy cost of having her transaction sold at privacy level $k$.
On the other side of the market, we consider a searcher $i$ of having a type: 
\begin{equation*}
    t_i = \left(v_{i,1,1}, v_{i,1,2}, \dots, v_{i,1,\ell}; \dots ; ,\ v_{i,m,1}, \dots , v_{i,m,\ell} \right)
\end{equation*}
where each coordinate $v_{i,j,k}$ is her valuation for transaction $j$ at privacy level $k$.
The matchmaker knows the prior distribution over the profile $(t_1, \dots ,t_n)$. We assume independence across searchers and transactions/privacy levels, so that 
\begin{equation*}
    F(t_1, \dots , t_n) = \prod_{i=1}^n\prod_{j=1}^m\prod_{k=1}^{\ell}F_{i,j,k}(v_{i,j,k})
\end{equation*}
where each marginal $F_{i,j,k}$ is the known CDF of searcher $i$'s value for transaction $j$ at privacy level $k$. Thus, the privacy marketplace is defined by the tuple $(n,m,l,\mathbf{F},\mathbf{G})$, where $[n]$ denotes the set of searchers, $[m]$ denotes the set of users, $[\ell]$ denotes the privacy levels,  $\mathbf{F}$ is the vector of searchers' valuation distributions and $\mathbf{G}$ is the vector of users' privacy cost distributions. Moreover, an allocation for the privacy marketplace $(n,m,l,\mathbf{F},\mathbf{G})$ is an array 
$\allocs = (x_{i,j,k})_{(i,j,k) \in [n] \times [m] \times [\ell]} \in \{0,1\}^{[n] \times [m] \times [\ell]}$,
where $x_{i,j,k} =\{0,1\}$ denotes that searcher $i$ is allocated transaction $j$ at privacy level $k$. We remind that searchers are unit-demand, which means that for every searcher $i$, it holds $\sum\limits_{j=1}^m \sum\limits_{k=1}^{\ell} x_{i,j,k} \in  \{0,1\}$ and each transaction $j$ can be sold at exactly one privacy level, corresponding to $\sum\limits_{i=1}^n \sum\limits_{k =1}^{\ell}x_{i,j,k} \in  \{0,1\}$. We introduce the decision variable $y_{j,k} \in \{0,1\}$ that indicates transaction $j$ is unsold and kept at privacy level $k$. Then for each transaction $j$, it holds $\sum\limits_{i=1}^n\sum\limits_{k=1}^ {\ell}x_{i,j,k}+\sum\limits_{k=1}^{\ell}y_{j,k}=1$.
We are now ready to define the social welfare of a mechanism for the two-sided market under a profile $(\vals, \costs)$ and an allocation rule $\allocs$:

\begin{equation}
\label{eq:welfare_two_sided}
    \SW_{\allocs}(\vals, \costs) = \sum_{i=1}^n \sum_{j=1}^m \sum_{k=1}^{\ell} \left( v_{i,j,k} - c_{j,k} \right)\cdot x_{i,j,k} \ + \  \sum_{j=1}^m \sum_{k=1}^{\ell}
    c_{j,k} \cdot y_{j,k}
\end{equation}
The first term in \eqref{eq:welfare_two_sided} is the searchers' welfare and the second term in \eqref{eq:welfare_two_sided} is the users' welfare. Additionally, the privacy marketplace consists of a payment vector $\prices = (\prices^S, \prices^U) \in \mathbb{R}^n \times \mathbb{R}^m$, where $\prices^S$ refers to the searchers’ vector of payments and $\prices^U$ to the users’ refunds. 
We would like to design a mechanism for the privacy marketplace that apart from obtaining high social welfare, satisfies the standard desired properties.
\begin{itemize}
    \item \textbf{Dominant Strategy Incentive Compatibility (DSIC).} It is a dominant strategy for every agent (searchers and users) to sincerely report their true beliefs (valuations and privacy costs respectively). In other words, for every searcher $i$ (resp. user $j$) for every vector of valuations and privacy costs, searcher $i$ (resp. user $j$) cannot increase her utility by misreporting her belief.
    \item \textbf{Individual Rationality (IR).} It is not harmful for any agent to participate in the mechanism.
    \item \textbf{Budget Balanced (BB).} The sum of all the payments is zero. In the two-sided market, this means that the sum of the searcher payments is equal to the sum of the payments received by the users and hence no external party needs to subsidize the mechanism.\footnote{It is known from the work of Myerson and Satterhwaite \cite{Myerson83Bilateral} that it is impossible to design a social welfare maximizing mechanism while being IR, IC and BB even in the bilateral-trade setting, where there is one seller and one buyer}
\end{itemize}

\subsection{An Approximately Efficient Privacy Marketplace.}
\label{subsec:two-sided-posted-price}
The mechanism that we describe relies on \cite{Colini20approximately}.
The process of designing an efficient two-sided market will be based on known techniques from combinatorial auctions through posted prices in standard one-sided markets. In a standard one-sided static posted-price mechanism with $m$ items, the mechanism designer posts a menu of prices $(p_j)_{j \in [m]}$, then an arbitrary order of buyers $i=1, \dots , n$ is fixed and subsequently each buyer $i$ arrives sequentially and purchases her utility-maximizing bundle of unsold items. 

First, for the privacy marketplace, we focus on budget balance mechanisms. In particular, we focus on mechanisms whereby if a trade happens between searcher $i$ and user $j$, the payment of searcher $i$ is transferred to user $j$, which means $p^S_i = p^U_j$. Such mechanisms are obviously budget-balanced.

The mechanism works as follows. Initially, it fixes a uniform privacy level across all transactions, for example, $k=0$, and by considering the respective marginal CDFs $F(t_1, \dots , t_n) = \prod\limits_{i=1}^n\prod\limits_{j=1}^mF_{i,j,0}(v_{i,j,0})$ and $G(t_1, \dots , t_m) = \prod\limits_{j=1}^m G_{j,0}(c_{j,0})$, it computes the matching that maximizes the expected social welfare. 
For each matched pair $(i,j)$, which from now on, we will refer to as $(i_j^*,j)$, we want to choose the privacy level $k_j^*$ that maximizes the ex-ante expected social welfare $\SW_{j,k}$ 
\begin{equation}
\SW_{j,k} = \mathbb{E}[c_{j,k}]+\mathbbm{1}\left\{\mathbb{E} \left[v_{i_j^*,j,k}\right] \ge 4 \cdot \mathbb{E}[c_{j,k}]\right\} \cdot \frac{1}{2}\left(\mathbb{E} \left[v_{i_j^*,j,k}\right] -  \mathbb{E}[c_{j,k}]\right)   
\end{equation}
Therefore,
\begin{equation}
\label{eq:optimal-privacy-level}
k_j^* = \arg\max\limits_k \SW_{j,k}
\end{equation}
Thus, based on \eqref{eq:optimal-privacy-level}, we derive the optimal privacy vector $\mathbf{k}^* = (k_1^*, \dots , k_m^*)$. 
We now partition the transactions into two disjoint sets. A transaction is said to be \textit{worth-matching} if and only if $\mathbb{E} \left[v_{i_j^*,j,k_j^*}\right] \ge 4 \cdot \mathbb{E}[c_{j,k_j^*}]$.
Hence, we define the set $W = \left\{j : \mathbb{E} \left[v_{i_j^*,j,k_j^*}\right] \ge 4 \cdot \mathbb{E}[c_{j,k_j^*}]\right\}$
For each pair $(i,j)$ such that $j \in W$, we define the posted price $p_j$:
\begin{equation}
\label{eq:posted_prices}
    p_j = \frac{\mathbb{E} \left[v_{i_j^*,j,k_j^*}\right]}{2}
\end{equation}
Thus, for each transaction $j$, the mechanism posts a price-privacy tuple $(p_j, k_j^*)$. Subsequently, the mechanism asks each user $j$ with probability 
\begin{equation}
\label{eq:probablistic-query}
    q_j = \frac{1}{2 \cdot G_{j,k^*_j}(p_j)}
\end{equation}
whether she accepts the offer $(p_j, k^*_j)$, that is, whether for her realized privacy cost $c_{j,k^*_j}$, it holds $p_j \ge c_{j,k^*_j}$.
Then, searchers arrive sequentially (in arbitrary order) and acquire their utility-maximizing transaction. Next, we fix an arbitrary order of searchers $i=1, \dots , n$, and the mechanism asks each searcher sequentially which is the utility-maximizing transaction given the price-privacy menu among transactions that are still available.

Before proceeding with analyzing the performance of the posted price mechanism, we remind that the pricing scheme \eqref{eq:posted_prices} is based on the well-known posted price mechanism for a single item \cite{Kleinberg12Matroid-prophet}, which states that, in a Bayesian setting, posting a price $p$ equal to half of the expected highest value results in a $\frac{1}{2}$-approximation of the optimal social welfare. This can be extended to combinatorial auction auctions and in particular, in our case, for auctions with unit-demand buyers. We provide this statement below in \Cref{lem:matching-posted-price}. Moreover, we highlight, that out of all the transaction worth-trading, that is, transactions $j \in W$, we further thin the market by making an offer to corresponding user with probability $q_j$ as defined in \eqref{eq:probablistic-query}. This makes each transaction $j \in W$ to be placed in the market with probability exactly 1/2 and it will be useful for the welfare analysis of the posted-price mechanism.


\subsection{Analysis of the Mechanism.}

\begin{theorem}
The posted-price mechanism for the privacy marketplace described in \Cref{subsec:two-sided-posted-price} is IR, DSIC, and BB. 
\end{theorem}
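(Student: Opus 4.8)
The plan is to verify the three properties separately, exploiting the single structural fact that the posted menu depends on the priors only. Observe first that the posted tuples $(p_j,k_j^*)$, the worth-matching set $W$, and the query probabilities $q_j=\tfrac{1}{2G_{j,k_j^*}(p_j)}$ are all computed from the publicly known distributions $\mathbf{F},\mathbf{G}$ (through the ex-ante quantities $\E[v_{i_j^*,j,k}]$ and $\E[c_{j,k}]$), and the arrival order of searchers is fixed in advance. Hence no realized report of any agent can shift a price, a privacy level, or a query probability; this one observation will drive both incentive compatibility and individual rationality.

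\textbf{Budget balance.} This is immediate from the construction. Whenever a trade occurs between searcher $i$ and the matched user $j$, the design sets $p_i^S=p_j^U=p_j$, so the payment leaving $i$ equals the refund entering $j$, while agents who do not trade neither pay nor receive. Summing over the realized matching, total searcher payments equal total user receipts, so the mechanism runs no deficit or surplus and needs no external subsidy (strong BB).

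\textbf{Individual rationality.} For a searcher, when it is her turn she selects a utility-maximizing transaction among those still available, and ``buy nothing'' (utility $0$) is always feasible; she purchases $j$ only when $v_{i,j,k_j^*}-p_j\ge 0$, so her realized utility is nonnegative. For a user, the query is a take-it-or-leave-it offer at the fixed price $p_j$ and level $k_j^*$, accepted exactly when $p_j\ge c_{j,k_j^*}$; under this threshold she never sells below her reservation, and whenever she is not queried or declines she simply retains the transaction at its outside-option value. Thus neither side can be made worse off by participating.

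\textbf{Incentive compatibility and the main obstacle.} For users the only action is the accept/reject decision when queried; since $p_j,k_j^*,q_j$ are independent of the user's report, comparing the fixed offer to her realized cost and accepting iff $p_j\ge c_{j,k_j^*}$ is optimal against every contingency, hence dominant. The delicate part is DSIC for searchers in the sequential format. I would fix an arbitrary realization of all other types and of the choices of the searchers preceding $i$; these, together with the users' acceptances, determine the set $A$ of still-available transactions that searcher $i$ faces, and crucially $A$ and its prices do not depend on $i$'s own report. Conditioned on this, searcher $i$ confronts a static posted-price problem, so choosing her true utility-maximizer in $A$ is a best response in every contingency and therefore a dominant strategy. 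I expect the main work to be making the ``availability is independent of $i$'s report'' step airtight -- verifying that the fixed arrival order and the prior-driven user queries leave searcher $i$ no lever to enlarge or reprice her menu -- after which the sequential mechanism inherits dominant-strategy truthfulness from the one-shot posted-price argument. A secondary point worth stating explicitly is the user's utility convention, so that the acceptance threshold $p_j\ge c_{j,k_j^*}$ coincides with both the IR-safe and the utility-maximizing decision.
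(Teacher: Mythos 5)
Your proposal is correct and takes essentially the same route as the paper: budget balance from the direct transfer $p_i^S=p_j^U$, IR from the availability of the no-trade option on both sides, and DSIC from the fact that the posted menu is computed from the priors alone so no report can move a price or privacy level. You spell out the searcher-side sequential argument (availability of transactions is independent of searcher $i$'s own report) more carefully than the paper, which simply asserts dominance, but the underlying argument is identical.
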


\begin{proof}
It is a dominant strategy for every user $j$ to accept the price-privacy tuple that was offered if the offered price exceeds her privacy cost for the posted privacy level. Moreover, it is a dominant strategy for every searcher $i$ to choose the utility-maximizing transactions and the corresponding price-privacy tuples offered to her. Additionally, it is straightforward to see that the mechanism can never decrease any user's or searcher's utility, and therefore it is ex-post IR. Lastly, it is also clear to see that the mechanism is budget-balanced since for each pair $(i,j)$ the payment of searcher $i$ is transferred to user $j$.
\end{proof}

\begin{theorem}
The posted-price mechanism for the privacy marketplace described in \Cref{subsec:two-sided-posted-price} achieves a $\frac{1}{6}$-approximation of the optimal expected social welfare.
\end{theorem}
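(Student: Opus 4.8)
**The plan is to decompose the factor of $\frac{1}{6}$ into three multiplicative sources of loss, each of which is a $\frac{1}{2}$-approximation, and bound them in sequence against the optimal social welfare benchmark.**

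First, I would fix the optimal full-information benchmark $\OPT = \E[\SW_{\allocs^*}(\vals,\costs)]$ from \eqref{eq:welfare_two_sided} and relate it to an ex-ante relaxation in which privacy levels are chosen optimally per transaction. The key reduction is to compare against the welfare achievable by the \textit{fixed privacy vector} $\mathbf{k}^*$ from \eqref{eq:optimal-privacy-level}. I expect a first factor of $\frac{1}{2}$ to come precisely from the definition of $\SW_{j,k}$: for each worth-matching transaction $j \in W$, the expression $\frac{1}{2}(\E[v_{i_j^*,j,k_j^*}] - \E[c_{j,k_j^*}])$ together with the retained privacy value $\E[c_{j,k}]$ already discards half of the achievable surplus by design, so I would argue that choosing $k_j^*$ to maximize $\SW_{j,k}$ loses at most a factor $\frac{1}{2}$ against the per-transaction optimal privacy choice. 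The worth-matching threshold $\E[v_{i_j^*,j,k_j^*}] \ge 4 \cdot \E[c_{j,k_j^*}]$ is what makes discarding transactions outside $W$ harmless, since their contribution to the optimum is dominated by the retained opportunity cost $c_{j,k}$.

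Second, I would handle the \textbf{user side} (sellers). The probabilistic query with acceptance probability $q_j = \frac{1}{2 G_{j,k_j^*}(p_j)}$ is engineered so that each worth-matching transaction is placed on the market with probability exactly $\frac{1}{2}$: the user is asked with probability $q_j$ and accepts with probability $G_{j,k_j^*}(p_j) = \Pr[c_{j,k_j^*} \le p_j]$, whose product is $\frac{1}{2}$. This is the classical trick from \cite{Colini20approximately} that decouples the two sides of the market and lets me treat the set of available transactions as an independent Bernoulli-$\frac12$ subset. I would invoke \Cref{lem:matching-posted-price} (the unit-demand posted-price / matroid-prophet guarantee, posting $p_j$ at half the expected highest searcher value per \eqref{eq:posted_prices}) to obtain the \textbf{searcher-side} $\frac{1}{2}$-approximation of the welfare conditioned on the available market. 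Composing the two independent halving steps with the searcher-side posted-price bound multiplies to $\frac{1}{2}\cdot\frac{1}{2}\cdot\frac{1}{2} = \frac{1}{8}$ naively, so the sharper $\frac{1}{6}$ must come from \emph{combining} the retained user welfare $\sum_{j,k} c_{j,k} y_{j,k}$ with the traded surplus rather than bounding them separately --- the retained-privacy term is collected with certainty on unsold transactions and should be credited directly, tightening the loss.

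The main obstacle I anticipate is \textbf{the interaction between the two sides}: the searcher-side posted-price argument requires the set of available transactions to be independent of the searchers' arrival and valuations, and proving that the $q_j$-thinning yields a clean product bound of $\frac{1}{6}$ (rather than $\frac{1}{8}$) requires carefully accounting for the fact that the opportunity-cost welfare $\E[c_{j,k_j^*}]$ is retained whether or not the transaction trades. Concretely, I would split each worth-matching transaction's optimal contribution into a guaranteed baseline $\E[c_{j,k_j^*}]$ plus a surplus $\E[v_{i_j^*,j,k_j^*}] - \E[c_{j,k_j^*}]$, credit the baseline in full, and apply the halving losses only to the surplus; summing over $j \in W$ and bounding the contribution of $j \notin W$ by the worth-matching inequality should yield the $\frac{1}{6}$ factor. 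The delicate point will be verifying that the surplus captured by the posted price $p_j = \frac{1}{2}\E[v_{i_j^*,j,k_j^*}]$, when a searcher actually purchases, is correctly charged against the benchmark after the query-thinning, which is where I would lean most heavily on the analysis of \cite{Colini20approximately}.
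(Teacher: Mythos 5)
You have assembled the correct ingredients---the $q_j$-thinning that places each worth-trading transaction on the market with probability exactly $\tfrac12$, \Cref{lem:matching-posted-price} for the searcher side, and the threshold $\E[v_{i_j^*,j,k_j^*}] \ge 4\,\E[c_{j,k_j^*}]$ to control transactions outside $W$---but the step that actually produces $\tfrac16$ is missing, and the route you gesture at is not the one that works. The paper does not obtain $\tfrac16$ by tightening a product $\tfrac12\cdot\tfrac12\cdot\tfrac12$, nor by crediting a ``baseline'' $\E[c_{j,k_j^*}]$ and halving only the surplus. It uses an \emph{additive} decomposition $\OPT=\OPT^{\Search}+\OPT^{\User}$ and proves two separate inequalities. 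On the searcher side, \Cref{lem:matching-posted-price} combined with the $\tfrac12$-availability gives $\ALG^{\Search}\ge \tfrac14\sum_{j\in W}\E[v_{i_j^*,j,k_j^*}]$, while for $j\notin W$ the worth-trading inequality gives $\ALG^{\User}\ge\sum_{j\notin W}\E[c_{j,k_j^*}]\ge\tfrac14\sum_{j\notin W}\E[v_{i_j^*,j,k_j^*}]$; adding these yields $\OPT^{\Search}\le 4\,(\ALG^{\Search}+\ALG^{\User})$. On the user side, since $p_j=\tfrac12\E[v_{i_j^*,j,k_j^*}]\ge 2\,\E[c_{j,k_j^*}]$ for every $j\in W$, Markov's inequality gives $\Pr[c_{j,k_j^*}>p_j]\le\tfrac12$, from which $2\,\ALG^{\User}\ge\sum_{j}\E[c_{j,k_j^*}]\ge\OPT^{\User}$. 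Summing the two bounds gives $\OPT\le 4\,\ALG^{\Search}+6\,\ALG^{\User}\le 6\,\ALG$. Your sketch never states either inequality, and you yourself observe that your multiplicative accounting lands at $\tfrac18$; the promised repair is left as ``should yield the $\tfrac16$ factor,'' which is precisely the part that needs to be proved.

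Two further concrete points. First, Markov's inequality---the engine of the user-side bound $\OPT^{\User}\le 2\,\ALG^{\User}$---does not appear anywhere in your proposal, and without it you have no lower bound on the users' welfare for transactions in $W$ that are put up for sale. Second, your claimed ``first factor of $\tfrac12$'' attributed to the choice of $k_j^*$ via the definition of $\SW_{j,k}$ is not a loss the paper incurs: the $\tfrac12$ inside $\SW_{j,k}$ is simply the searcher-side posted-price guarantee baked into the objective used to \emph{select} the privacy level, not an additional approximation factor to be composed with the others. Treating it as a third independent halving is what drives your count to $\tfrac18$ and obscures where the $6=4+2$ actually comes from.
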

The proof will be based on the following two lemmas that bound separately the contribution of searchers and users in the total social welfare. We use $\OPT$ to denote the optimal social welfare and we use $\ALG$ to denote the social welfare of the posted price mechanism. Moreover, we split the the social welfare to the searchers and users contribution, that is $\OPT = \OPT^{\Search} + \OPT^{\User}$ and $\ALG = \ALG^{\Search} + \ALG^{\User}$. 
First, we give a lemma on the performance of static posted prices in an one-sided market with unit-demand buyers. Note that, in our two-sided market, after fixing the privacy vector $\mathbf{k}^*$ by \eqref{eq:optimal-privacy-level}, the agents' dimensionality is reduced and we end up with a two-sided market with $m$ items. Subsequently, we will use this lemma to upper bound the contribution of searchers' welfare in the optimal allocation by taking into account that \textit{worth trading} transactions are in the market with probability exactly 1/2.

\begin{lemma}
\label{lem:matching-posted-price}
Consider an auction with $n$ unit-demand bidders and $m$ different items. Let $\allocs = (x_{i,j})_{(i,j) \in [n] \times [m]} \in \{0,1\}^{[n] \times [m]}$ be a feasible matching allocation. Under a realized valuation profile $\vals$, let $x^*$ be an optimal allocation and 
$\OPT(\vals) = \sum\limits_{i=1}^n \sum\limits_{j=1}^m v_{i,j}\cdot x_{i,j}^*$ be the welfare of the optimal allocation. We define $V_j^*$ as the marginal contribution of item $j$ in the optimal matching $x^*$.\footnote{Under a matching constraint, the marginal contribution of item $j$ is just the bidder $i$'s valuation $v_{i,j}$ who is allocated item $j$.} The posted price mechanism that posts for each item $j$ the price $p_j =  \frac{\mathbb{E}_{\vals} \left[V^*_j(\vals) \right]}{2}$ achieves at least $\frac{1}{2}$-approximation of the optimal social welfare.
\end{lemma}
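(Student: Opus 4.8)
The plan is to follow the standard revenue-plus-surplus decomposition underlying posted-price prophet inequalities for matching markets (the balanced-prices technique of \cite{Colini20approximately}, generalizing the single-item mean-threshold argument). Write $\ALG$ for the expected welfare of the mechanism, and observe that it suffices to show $\ALG \ge \sum_j p_j$, since by definition $\sum_j p_j = \tfrac12 \sum_j \E[V_j^*] = \tfrac12 \E[\OPT(\vals)]$. The starting point is the exact, per-realization identity that the welfare collected splits into the revenue from sold items and the realized buyer surplus: if $\mathcal{S}$ denotes the random set of sold items and $u_i$ the realized utility of buyer $i$, then the welfare equals $\sum_{j \in \mathcal{S}} p_j + \sum_i u_i$. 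Taking expectations, the revenue term equals $\sum_j p_j \Pr[j \in \mathcal{S}]$, so the whole argument reduces to proving that the expected surplus is at least $\sum_j p_j\bigl(1 - \Pr[j \in \mathcal{S}]\bigr)$.

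First I would lower-bound the surplus buyer by buyer against the optimal matching. Fix a realization $\vals$, let $x^*$ induce the matching $\pi$, and let $b(j)$ be the buyer assigned item $j$, so that $V_j^* = v_{b(j),j}$. When buyer $b(j)$ arrives she may always purchase item $j$ whenever it is still available, securing at least $V_j^* - p_j$; and if instead item $j$ was already taken, its price $p_j$ has been booked as revenue. Hence $u_{b(j)} \ge (V_j^* - p_j)^+ \cdot \mathbbm{1}[\,j \text{ available when } b(j) \text{ arrives}\,]$, using that $u_{b(j)} \ge 0$ always. Since each buyer is the optimal partner of at most one item, this sums without double counting to $\sum_i u_i \ge \sum_j (V_j^* - p_j)^+\, \mathbbm{1}[\,j \text{ available when } b(j) \text{ arrives}\,]$.

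The crucial step is to decouple the two factors in each product and exploit the pricing. Whether an item is still available at the moment a buyer arrives is determined solely by the valuations of the buyers preceding her, and availability at arrival time is implied by the item never being sold, so its probability is at least $1 - \Pr[j \in \mathcal{S}]$. Combining this with the balanced choice of prices, which gives $\E[(V_j^* - p_j)^+] \ge \E[V_j^*] - p_j = 2p_j - p_j = p_j$, yields the target bound $\E[\sum_i u_i] \ge \sum_j p_j\bigl(1 - \Pr[j \in \mathcal{S}]\bigr)$. Adding back the revenue term then gives $\ALG \ge \sum_j p_j \Pr[j\in\mathcal{S}] + \sum_j p_j\bigl(1 - \Pr[j \in \mathcal{S}]\bigr) = \sum_j p_j = \tfrac12 \E[\OPT(\vals)]$, which is exactly the claimed $\tfrac12$-approximation.

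The main obstacle I anticipate is precisely this decoupling under the matching constraint. Unlike the single-item case, the optimal partner $b(j)$, the marginal contribution $V_j^*$, and the sale events all depend on the \emph{full} valuation profile, so the independence between a buyer's own value and the availability of her item is not immediate, and the clean factorization of the per-item expectation cannot be taken for granted. Resolving it requires reorganizing the surplus lower bound by buyers and invoking independence \emph{across} buyers — each availability event depends only on earlier arrivals — so that, after the right conditioning, the per-buyer expectations factor exactly as in the single-item Samuel-Cahn argument. Making this conditioning fully rigorous, rather than the routine price algebra, is where the real work lies.
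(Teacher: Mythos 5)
The paper itself does not prove this lemma; it states it and defers to the posted-price literature (\cite{Kleinberg12Matroid-prophet}, \cite{Colini20approximately}), so the comparison here is against the standard ``balanced prices'' argument you are clearly reconstructing. Your architecture is the right one --- welfare $=$ revenue $+$ surplus, revenue $=\sum_j p_j\Pr[j\in\mathcal{S}]$, and a surplus bound of $\sum_j p_j\bigl(1-\Pr[j\in\mathcal{S}]\bigr)$ --- but there is a genuine gap at exactly the step you flag and then wave at. The decoupling $\mathbb{E}\bigl[(V_j^*-p_j)^+\,\mathbbm{1}[j\text{ available when }b(j)\text{ arrives}]\bigr]\ \ge\ \mathbb{E}\bigl[(V_j^*-p_j)^+\bigr]\cdot\bigl(1-\Pr[j\in\mathcal{S}]\bigr)$ needs the two factors to be independent (or positively correlated), and they are not: $V_j^*$ and the identity of $b(j)$ are functions of the \emph{entire} profile $\vals$, including the buyers who arrive before $b(j)$ and determine availability. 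Worse, the correlation runs against you: a large realized $V_j^*$ makes item $j$ more likely to be bought, so $(V_j^*-p_j)^+$ and $\mathbbm{1}[j\notin\mathcal{S}]$ are, if anything, negatively correlated, and $\mathbb{E}[XY]\ge\mathbb{E}[X]\mathbb{E}[Y]$ fails in the direction you need. Your proposed repair --- ``the right conditioning'' on earlier arrivals --- cannot fix this, because no conditioning makes a statistic of the full profile independent of the availability events.

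The missing idea is the resampling (ghost-sample) device. For each buyer $i$, draw an independent fresh copy $\tilde{\vals}_{-i}$ of the other buyers' values and let $j_i$ be the item assigned to $i$ in the optimal matching of the hybrid profile $(v_i,\tilde{\vals}_{-i})$. The bound $u_i\ge (v_{i,j_i}-p_{j_i})^+\,\mathbbm{1}[j_i\in A_i]$ still holds, but now $(v_{i,j_i}-p_{j_i})^+$ is a function of $(v_i,\tilde{\vals}_{-i})$ alone while $A_i$ depends only on the true values of earlier arrivals, so the expectation genuinely factorizes and $\Pr[j\in A_i]\ge 1-\Pr[j\in\mathcal{S}]$ can be applied. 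Summing over $i$, using that $(v_i,\tilde{\vals}_{-i})$ is distributed as $\vals$, and reindexing by items recovers $\sum_j\mathbb{E}[(V_j^*-p_j)^+]\bigl(1-\Pr[j\in\mathcal{S}]\bigr)\ge\sum_j p_j\bigl(1-\Pr[j\in\mathcal{S}]\bigr)$, after which your price algebra closes the proof exactly as you wrote it.
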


Note that, the optimal social welfare in the one-sided market upper bounds the searchers' contribution $OPT^{\Search}$ in the optimal social in our two-sided market since in $OPT^{\Search}$ contribute only the transactions worth-trading and each transaction $j \in W$ is eventually in the market with probability  $q_j = \frac{1}{2\,\Pr[c_{j,k_j^*} \le p_j]}$. Therefore, 
\begin{equation}
\label{eq:searchers-bound}
    \ALG^{\Search} \ge \frac{1}{4}\sum_{j \in W}\mathbb{E}[v_{i^*_j,j,k^*j}]
\end{equation}
Next, we provide a lemma for the transaction that are not worth-trading, i.e. for each matched pair $(i^*_j,j)$ such that $j \not \in W$, it holds that $4\mathbb{E}[c_{j,k_j^*}] > \mathbb{E}[v_{i^*_j,j,k^*j}]$. Hence
\begin{equation}
\label{eq:user-not-trading}
\ALG^{\User} \ge \sum_{j \not \in W}\mathbb{E}[c_{j,k_j^*}] \ge \frac{1}{4}\sum_{j \not \in W}\mathbb{E}[v_{i^*_j,j,k^*j}]
\end{equation}
Thus, by adding \eqref{eq:searchers-bound} and \eqref{eq:user-not-trading} we sum over all transactions $j \in [m]$ and we obtain:
\begin{equation}
\label{eq:searchers-upper-bound}
\ALG^{\Search} + \ALG^{\User} \ge \frac{1}{4} \OPT^{\Search}     
\end{equation}

\begin{lemma}[Users' contribution bound]
\label{lem:users-unit-demand}
If every worth-trading transaction $j \in W$ becomes available with probability exactly $1/2$, then for the users' welfare it holds
\begin{equation}
\label{eq:users-upper-bound}
2\,ALG^{U} \;\ge\; \sum_{j=1}^{m}\mathbb{E}[c_{j,k_j^*}] \;\ge\; OPT^{U},
\end{equation}
\end{lemma}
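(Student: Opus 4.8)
The plan is to prove the two inequalities of the statement separately, in both cases decomposing the users' welfare transaction by transaction and splitting the transactions into the worth-trading set $W$ and its complement. The right-hand inequality $\sum_{j=1}^m\mathbb{E}[c_{j,k_j^*}]\ge\OPT^{\User}$ is the routine direction: a transaction contributes to $\OPT^{\User}$ only when the optimal allocation leaves it unsold, in which case its contribution is its expected privacy cost at the posted level $k_j^*$; since the optimum retains only a subset of $[m]$ and every $\mathbb{E}[c_{j,k_j^*}]$ is non-negative, summing over all transactions rather than only the retained ones can only increase the total, giving $\OPT^{\User}\le\sum_{j\in[m]}\mathbb{E}[c_{j,k_j^*}]$.

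The substance is the left-hand inequality $2\,\ALG^{\User}\ge\sum_{j=1}^m\mathbb{E}[c_{j,k_j^*}]$, which I would prove per transaction. For $j\notin W$ the transaction is never offered, so it is retained with certainty and contributes exactly $\mathbb{E}[c_{j,k_j^*}]$; thus $2\,\ALG^{\User}_j\ge\mathbb{E}[c_{j,k_j^*}]$ holds trivially. For $j\in W$ the transaction is retained whenever it is not placed in the market (and also when placed but unbought, which only helps), so its contribution to $\ALG^{\User}$ is at least the expected cost on the event that the user is either not queried or is queried but rejects. Writing $G=G_{j,k_j^*}(p_j)$, $q_j=\tfrac{1}{2G}$, $A=\mathbb{E}[c_{j,k_j^*}\mathbbm{1}\{c_{j,k_j^*}\le p_j\}]$ and $B=\mathbb{E}[c_{j,k_j^*}\mathbbm{1}\{c_{j,k_j^*}>p_j\}]$, this lower bound evaluates to $(1-q_j)(A+B)+q_jB=(1-q_j)A+B$, and it remains to show $(1-q_j)A+B\ge\tfrac12(A+B)$, equivalently $G\,B\ge(1-G)\,A$.

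To close this, I would first use membership in $W$: it gives $p_j=\tfrac12\mathbb{E}[v_{i_j^*,j,k_j^*}]\ge 2\,\mathbb{E}[c_{j,k_j^*}]$, so Markov's inequality yields $\Pr[c_{j,k_j^*}>p_j]\le\tfrac12$, i.e. $G\ge\tfrac12$, which is precisely what makes $q_j\in[\tfrac12,1]$ a valid probability. The elementary truncation bounds $A\le p_jG$ and $B\ge p_j(1-G)$ then give $G\,B\ge p_jG(1-G)\ge(1-G)\,A$, establishing $2\,\ALG^{\User}_j\ge\mathbb{E}[c_{j,k_j^*}]$ for every $j\in W$; summing over all $j$ completes the proof. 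The main obstacle is exactly this worth-trading case: because $q_j\ge\tfrac12$, the "not queried" event alone contributes only $(1-q_j)\mathbb{E}[c_{j,k_j^*}]\le\tfrac12\mathbb{E}[c_{j,k_j^*}]$ and cannot by itself supply the factor $1/2$, so one must exploit the positive correlation between a high realized cost and retention (the term $q_jB$) together with the Markov bound $G\ge\tfrac12$ to recover the constant.
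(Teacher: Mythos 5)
Your proof is correct and takes essentially the same route as the paper's: split the transactions into $W$ and its complement, use the worth-trading condition $p_j\ge 2\,\mathbb{E}[c_{j,k_j^*}]$ together with Markov's inequality, and show each $j\in W$ contributes at least $\tfrac12\mathbb{E}[c_{j,k_j^*}]$ to $\ALG^{\User}$ while each $j\notin W$ contributes its full keep-value. Where you go beyond the paper is in actually justifying the per-transaction bound for $j\in W$: the paper asserts it in one line from ``available with probability $1/2$,'' which is not immediate, since the query probability $q_j=1/(2G_{j,k_j^*}(p_j))$ generally exceeds $1/2$ and the retention event is correlated with the realized cost. Your computation --- reducing the claim to $(1-q_j)A+B\ge\tfrac12(A+B)$, i.e.\ $G\,B\ge(1-G)\,A$, and closing it with the truncation bounds $A\le p_jG$ and $B\ge p_j(1-G)$ --- is exactly the missing step, and you correctly isolate the role of Markov's inequality as guaranteeing $G\ge\tfrac12$ so that $q_j$ is a valid probability. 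One caveat you share with the paper: the right-hand inequality $\sum_j\mathbb{E}[c_{j,k_j^*}]\ge\OPT^{\User}$ implicitly assumes the optimum retains an unsold transaction at level $k_j^*$ rather than at a keep-value-maximizing level $\arg\max_k\mathbb{E}[c_{j,k}]$; this gloss is inherited from the paper's own treatment and is not a defect of your argument relative to it.
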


\begin{proof}
For each transaction $j\in W$, its posted price is $p_j=\tfrac12 \mathbb{E}[v_{i^*_j,j,k_j^*}]\ge 2\,\mathbb{E}[c_{j,k_j^*}]$. 
So, for each transaction $j \in W$, we have
\begin{equation*}
\Pr[c_{j,k_j^*}> p_j] \le \Pr[c_{j,k_j^*} > 2\mathbb{E}[c_{j,k_j^*}] \le \frac{1}{2}, 
\end{equation*}
where the last inequality holds due to Markov's inequality.
Hence, $\Pr[pj \ge c_{j,k_j^*}] \ge$ which means that, with probability at least $1/2$, user $j$ gains value by selling at price $p_j$. and this means that
each transaction $j\in W$ contributes at least $\mathbb{E}[c_{j,k_j^*}]/2$ in expectation to the users' side. Moreover, each $j\not\in W$ never trades and hence it contributes its full $\mathbb{E}[c_{j,k_j^*}]$.
Summing over $j$ gives $2\,ALG^{U} \ge \sum_j \mathbb{E}[c_{j,k_j^*}]$.
Clearly $\OPT^{U}\le \sum_j \mathbb{E}[c_{j,k_j^*}]$ since the users' contribution in any allocation is at most the total keep-value.
\end{proof}

\begin{proof}[Overall bound on the posted-price mechanism]
By adding \eqref{eq:searchers-upper-bound} and \eqref{eq:users-upper-bound} we retrieve the desired approximation bound:
\begin{equation}
    \OPT = \OPT^{\User} + \OPT^{\Search} \le 2\cdot \ALG^{\User} + 4 \cdot \ALG^{\User} + 4 \cdot \ALG^{\Search} \le 6 \cdot \ALG
\end{equation}

\end{proof}

\section{Discussion on Implementations and Future Work}
In this work, we defined and formally analyzed two economic mechanisms that allow users to maintain a degree of privacy in their transactions while selling their transactions by trading off revenue for privacy through payment for order flow schemes. Importantly, such mechanisms enable users to internalize the value their transactions create in the block-building process.  

These simple mechanisms provide an optimistic outlook on mechanisms that trade efficiency for privacy, allowing customizable privacy to users. Such mechanisms fall into the emerging paradigm of Programmable Privacy. This paradigm involves using various programmable cryptographic schemes with high security assurance (e.g. Zero Knowledge Proofs, Fully Homomorphic Encryption, and Multi-Party Computation). Our work builds on the emerging line of research that attempts to improve the performance of economic mechanisms by incorporating cryptographic primitives into them.

Future work includes leveraging information-theoretic tools in order to explicitly define the loss of privacy. Last but not least, it is well known that generating randomness in blockchains is computationally costly. For example, smart contracts in Proof of Stake systems have access to Verifiable Random Functions (VRFs) which can provide some amount of entropy. If users want access to VRFs, blockchains should implement a pricing mechanism in order to allocate such scarce resources efficiently. This is closely related to the recent line of work on multidimensional fee markets \cite{Diamandis23multidimensional}. 

\bibliography{bibliography}
\bibliographystyle{unsrt}

\newpage
\appendix


\section{Revenue Curves}
\label{sec: revenue_curves}
The expected revenue of the auctioneer equals the revenue and is the expected value of the second order statistic $X_{(2)}$, and thus we can write the revenue as:
\begin{equation}
\label{eq:user_revenue_spa}
    Rev(n, \e) = \int_0^{\infty} v n(n-1)(1-F_{\e}(v))\left( F_{\e}(v) \right)^{n-2}f_{\e}(v)dv
\end{equation}
\paragraph*{Expected Revenue of the Optimal Auction.} We now write the closed form formula of the maximum expected revenue as a function of the number of bidders $n$ and the privacy parameter $\e$. This is the second price auction with a reserve price at $r_{\e} = \phi_{\e}^{-1}(0)$, where $\phi_{\e}(x) = x - \frac{1- F_{\e}(x)}{f_{\e}(x)}$. Consider the order statistics $X_{(1)}>X_{(2)} > \dots > X_{(n)}$. We distinguish three cases for the revenue:
\begin{enumerate}[(i)]
    \item $r_{\e}> X_{(1)}>X_{(2)} $
    \item $X_{(1)}>r_{\e}>X_{(2)} $
    \item $X_{(1)}>X_{(2)}>r_{\e}$
\end{enumerate}
Case (i) happens with probability $F_{\e}^n(r_{\e})$ and has no revenue. Case (ii) happens with probability $n \cdot (1- F_{\e}(r_{\e}))\cdot F_{\e}^{n-1}(r_{\e})$ and has revenue $r_{\e}$. Case (iii) happens with probability  $\sum\limits_{j=2}^{n} \binom{n}{j} \cdot (1- F_{\e}(r_{\e}))^{j} \cdot F_{\e}^{n-j}(r_{\e})$. In words, for case (iii) at least two bids have to be greater than $r$. For case (iii), the revenue depends on the number $k$ of bids that are greater than $r_{\e}$. Suppose that $j$ bids are greater than $r$, where $2 \le j \le n$ and for each $j$ consider the order statistics $Z^j_{(1)}>Z^j_{(2)} > \dots > Z^j_{(j)}$ that come from the same distribution but with support $[r_{\e}, v_{max}]$. The revenue now is equal to the second order statistic,  $Z^j_{(2)}$. All in all, the expected revenue is:
\begin{equation}
\label{eq:user_revenue_optimal}
    Rev(n, \e) = n \cdot (1- F_{\e}(r_{\e}))\cdot F_{\e}^{n-1}(r_{\e}) \cdot r_{\e} + \sum\limits_{j=2}^{n} \binom{n}{j} \cdot (1- F_{\e}(r_{\e}))^{j} \cdot F_{\e}^{n-j}(r_{\e}) \cdot Z^j_{(2)}
\end{equation}

\paragraph*{Distributions $F$.} To gain some intuition and understanding, we now study how the user's revenue behaves under two different valuation distributions $F_{\e}$. Before proceeding with the analysis, from now on, we will use \eqref{eq:user_revenue_spa} as the benchmark for the optimal revenue rather than the more complex \eqref{eq:user_revenue_optimal}. The difference between \eqref{eq:user_revenue_optimal} and \eqref{eq:user_revenue_spa} becomes negligible as the number of bidders increases, as it was shown in Bulow-Klemperer Theorem \cite{Bulow96}, and moreover, the goal of this work is to mainly analyze the impact of privacy on the revenue.
In each of the following examples, the parameter $\e$ allows us to control the distribution of bidders' valuations, with higher values of $\epsilon$ corresponding to distributions that favor higher valuations.

\begin{example}
Suppose that $F$ is the exponential distribution with rate $\e$, i.e., $F_{\e}(v) = 1 - \exp(- v/{\e})$ and $f_{\e}(v) = \frac{1}{\e}\exp(-v/{\e})$
Then, the revenue of the user is:
\begin{equation*}
\label{eq:revenue_exponential}
\begin{array}{lll}
\text{Rev}(n,\e) & = & \int_0^{\infty} v \, n(n-1) \left( 1 - (1 - \exp(-v/\epsilon)) \right) \left( \frac{1}{\epsilon} \exp(-v/\epsilon) \right) \left( (1 - \exp(-v/\epsilon)) \right)^{n-2} \, dv \\
 & = & \int_0^{\infty} v \, n(n-1) \exp(-v/\epsilon) \left( \frac{1}{\epsilon} \exp(-v/\epsilon) \right) \left( (1 - \exp(-v/\epsilon)) \right)^{n-2} \, dv \\
     & = & \frac{n (n-1)}{\e} \int_0^{\infty} v \exp(-2v/\epsilon) (1 - \exp(-v/\epsilon))^{n-2} \, dv

\end{array}
\end{equation*}
\end{example}

\begin{example} 
Consider now the following cumulative distribution $F_{\e}(v)= \left(\frac{v}{v_{max}}\right)^{\e}$, for  $v \in [0, v_{max}]$, or equivalently $F_{\e}(v)= v^{\e}$, for $v \in [0,1]$
and $f_{\e}(v)= \e v^{\e-1}$. In this case, the revenue of the user is 
\begin{equation*}
\label{eq:revenue_power}
\begin{array}{lll}
\text{Rev}(n,\e) & = & \int_0^1 v \, n(n-1) (1 - v^\epsilon) \epsilon v^{\epsilon - 1} (v^\epsilon)^{n-2} \, dv \\
     & = & n (n-1) \epsilon \int_0^1 v^{\epsilon (n - 1)} - v^{\e n} \, dv \\
     & = & \frac{\e^2 n (n-1)}{\left(\e(n-1) +1\right)(\e n +1)}

\end{array}
\end{equation*}
\end{example}

\section{Omitted Proofs}
\label{sec: omitted_proofs}
\begin{proof}[Proof of \Cref{lem:batched-ratio}]
First, we define the polynomial $P_n(\tau_i)$
\begin{equation*}
    P_n(\tau_i) = \sum\limits_{l=0}^{n-1} \frac{1}{l+1}\binom{n-1}{l}G(\tau_i)^{n-1-l}(1-G(\tau_i))^l = \frac{1}{n} \cdot \frac{1- G(\tau_i)^n}{1-G(\tau_i)} = \frac{1}{n} \sum\limits_{l=1}^{n-1}G(\tau_i)^l
\end{equation*}
which is the probability a reward $V_i$ is selected conditioned on it passing the threshold $\tau_i$. We remind that a reward is chosen uniformly at random between the rewards that pass the threshold $\tau_i$. By choosing a single threshold $\tau_1$, the expected reward obtained is at least:
\begin{equation}
\label{eq:firstround}
    \left(1- G(\tau_1)^n \right) \tau_1 + P_n(G(\tau_1)) \sum_{i=1}^n \mathbb{E}[(V_i-\tau_1)^+]
\end{equation}
Moreover, from the standard prophet-inequality analysis, we derive the following useful inequality:
\begin{equation}
\label{eq:prophet_ineq}
\tau_1 + \sum_{i=1}^n \mathbb{E}[(V_i-\tau_1)^+ \ge \tau_1 +  \mathbb{E}[(V_i-\tau_1)^+ \ge \OPT
\end{equation}
Moreover, for $\tau_1 =\left( \frac{1}{e} \right)^n$, it holds that:
\begin{equation}
\label{eq:bound:min}
    \min\{(1-G(\tau_1)^n), P_n(G(\tau_1))\} \ge 1- \frac{1}{e}.
\end{equation}
It is possible to choose a threshold $\tau_i$ such that $\min \{(1-G(\tau_i)^n, P_n(G(\tau_i)\}  \ge 1 - \frac{1}{e}$, by choosing, for example,
    $G(\tau_i) = 1 - \frac{1}{n} \approx \left(\frac{1}{e}\right)^{\frac{1}{n}}$.

Now, we derive the recursive formula for the second round. In order to get to the second round, all the rewards should be below the threshold $\tau_1$ which happens with probability $G(\tau_1)^n$. We follow the same process by just replacing the probability $G(x)$ with $\frac{G(x)}{G(\tau_1)}$, since, in order to proceed to the second round, we must condition on $V_i < \tau_1$. We can now write the expected reward:
\begin{align}
\label{eq:reward}
\mathbb{E}[\ALG (\tau_1, \tau_2, \dots , \tau_k)] \ge & \left(1- G(\tau_1)^n \right) \tau_1 + P_n(G(\tau_1)) \sum_{i=1}^n \mathbb{E}[(V_i-\tau_1)^+] \nonumber \\
& + G(\tau_1)^n \mathbb{E}[\ALG (\tau_1, \tau_2, \dots , \tau_k) | V_j \le \tau_1 , j \in [n]]
\end{align}

By using the inequality \eqref{eq:bound:min}, we can further bound the expected reward in \cref{eq:reward}

\begin{align*}
    & \left(1- \frac{1}{e} \right) \left(\tau_1 + \sum_{i=1}^n \mathbb{E}[(V_i-\tau_1)^+] \right) + \frac{1}{e} \left(1- \frac{1}{e} \right) \left(\tau_2 + \sum_{i=1}^n \mathbb{E}[(V_i-\tau_2)^+] \right) + \\
    & + \left(\frac{1}{e}\right)^2 \left(1- \frac{1}{e} \right) \left(\tau_3 + \sum_{i=1}^n \mathbb{E}[(V_i-\tau_3)^+] \right) + \dots  \\
    & \ge \left(1- \frac{1}{e} \right) \mathrm{OPT} + \frac{1}{e} \left(1- \frac{1}{e} \right) \mathrm{OPT} + \left(\frac{1}{e}\right)^2 \left(1- \frac{1}{e} \right) \mathrm{OPT} + \cdots    = \mathrm{OPT} \left(1- \frac{1}{e^k} \right).
\end{align*}
This completes the proof.
\end{proof}

\end{document}